\newcommand{\kb}[1]{\ket{#1}\bra{#1}}
\newcommand{\bk}[1]{\braket{#1|#1}}
\newcommand{\innerprod}[2]{\langle #1 | #2 \rangle}
\newcommand{\ds}{\displaystyle}
\newcommand{\e}{\wedge}
\newcommand{\Tr}{\texttt{Tr}}
\newcommand{\Real}{\texttt{Re}}
\newtheorem{protocol}{Protocol} 
\newtheorem{theorem}{Theorem}
\begin{document}

\title{Semi-Quantum Conference Key Agreement with GHZ-type states}

\author{R\'uben Barreiro$^{1}$\and Walter O. Krawec$^{2}$\and Paulo Mateus$^{1}$ \and Nikola Paunkovi\'c $^{1}$\and Andr\'e Souto$^{3}$}

\date{\footnotesize\textsuperscript{\textbf{1}} Instituto de Telecomunica\c{c}\~oes and Departamento de Matemática,\\ Instituto Superior T\'ecnico da Universidade de Lisboa \\
Av. Rovisco Pais 1, 1049-001, Lisboa, Portugal \\ \ \\ \textsuperscript{\textbf{2}}School of Computing, University of Connecticut\\
Storrs, CT USA\\ \ \\ \textsuperscript{\textbf{3}}Lasige and Faculdade de Ci\^encias da Universidade de Lisboa\\
Campo Grande 016, 1749-016, Lisboa, Portugal}

\maketitle

\begin{abstract}
We propose a semi-quantum conference key agreement (SQCKA) protocol that leverages on GHZ states. We provide a comprehensive security analysis for our protocol that does not rely on a trusted mediator party. We present information-theoretic security proof, addressing collective attacks within the asymptotic limit of infinitely many rounds. This assumption is practical, as participants can monitor and abort the protocol if deviations from expected noise patterns occur.  This advancement enhances the feasibility of SQCKA protocols for real-world applications, ensuring strong security without complex network topologies or third-party trust.
\end{abstract}

\section{Introduction}
\label{sec:introduction}

The primary threat quantum computing poses to classical public-key cryptography is still due to Shor's algorithm~\cite{sho:94}. This algorithm reduced to (quantum probabilistic) polynomial time the difficulty of breaking cryptographic systems commonly used nowadays, like RSA and elliptic-curve cryptography. Both problems are based on mathematical hardness assumptions that do not hold in quantum realm. Therefore, new quantum-resistant secure communication protocols and cryptographic applications have been extensively studied for the last decades. The first example of this was Wiesner's proposal  of unforgeable quantum money based on no-cloning feature of quantum mechanics~\cite{wiesner:1983} (proposed in early 70s, but only published in 1984).

Quantum mechanics brought to computer science, especially cryptography, a fresh perspective on dealing with the security and privacy of communications. In particular, replacing hardness mathematical assumptions, similar to the factoring problem mentioned above, by the laws of physics to ensure secrecy is the most relevant advantage of using a quantum-based approach to cryptography.

One of the most basic secure communication protocols aims to establish a common key between two parties, called \emph{key distribution}. Such a protocol enables the parties to communicate secretly using, for example, a one-time pad encryption scheme~\cite{vernam:sec-sig-sys:1919} -- a theoretically secure cryptographic scheme using a symmetric key. Diffie-Hellman protocol~\cite{dif:hel:76} is probably the most well-known and used example of a key distribution protocol. The underlying problem relies on the hardness of computing the discrete logarithm, but as mentioned previously, Shor's algorithm makes this protocol vulnerable to quantum computers as the discrete-log problem is also in bounded error quantum polynomial-time class (BQP). In 1984, Bennett and Brassard developed the first-ever quantum key distribution protocol (QKD), known as the BB84 protocol~\cite{ben:bra:84}, which was subsequently proven to be theoretically secure in~\cite{shor-preskill:sim-pro-sec-bb8-qua-key-dis-pro:2000}.  By now there are many different QKD protocols and the field has advanced rapidly both in theory and in practice~\cite{Massa2022}.  For a survey, the reader is referred to \cite{pirandola2020advances}.

Moving beyond the two-party scenario of QKD, so-called \emph{quantum conference key agreement} (QCKA) protocols allow for $n$ parties to establish a joint shared key (also called a \emph{conference key} or a \emph{group key}).  QCKA protocols have advanced rapidly from early work in the field \cite{cabello2000multiparty,epping2017multi} to newer protocols and security proofs \cite{grasselli2019conference,wu2016continuous,ottaviani2019modular} including experimental implementations \cite{proietti2021experimental}.  While two-party QKD protocols can be used to establish group keys by running multiple instances of the protocol between a ``leader'' and each other user, this often leads to inefficiencies \cite{epping2017multi}.  In particular, using a two-party QKD protocol to establish a group key would require that the number of instances of bipartite QKD protocols needed to establish a shared key between many parties does not scale optimally with the number of users. For example, using pairwise instances of QKD protocols between Alice and each Bob to distribute a pre-shared key between Alice and each Bob initially. Then Alice uses the pairwise key to wrap a final conference key generated on the fly to be common to all the parties. This solution scales linearly with the number of participants. Still, in the presence of eavesdropping and noise, each of the instances of QKD requires {\em pairwise} steps of information reconciliation and privacy amplification~\cite{bra:sal:94,ben:etal:95,ren:08}. Also, the number of quantum devices required is equal to the number of users. On the other hand, it also increases the burden of Alice constantly establishing several bipartite symmetric keys with Bobs to wrap the conference key and then distribute it to all of them using encryption algorithms like AES~\cite{dae:rij:00}, or (quantum) one-time pad~\cite{vernam:sec-sig-sys:1919,mos:tap:wol:00}.   For a recent survey on QCKA protocols, the reader is referred to \cite{murta2020quantum}.

Most of the proposed QCKA solutions solve the scalability problem but have the characteristic that the more parties are involved, the more quantum resources are needed, similar to QKD. To reduce the number of quantum devices necessary and to enable the possibility of seeing quantum cryptography as a practical service, the idea of a semi-quantum key distribution (SQKD)~\cite{boyer-kenigsber-mor:qua-key-dis-wit-cla-bob:2007,lu-cai:qua-key-dis-wit-cla-ali:2008} was established.  In such a scenario, one party is ``fully quantum'' in that he/she can manipulate quantum resources arbitrarily (e.g., can create any qubit state and measure in any basis).  The other party or parties are restricted to ``semi-quantum'' or almost ``classical'' operations.  Namely the semi-quantum users can only measure or send qubits in a single, fixed, basis (typically the computational $Z$ basis) or can only ``reflect'' qubits back to the sender, essentially disconnecting from the quantum channel, and sending any and all quantum states back to whoever sent them (without otherwise disturbing them or extracting information from them).

Numerous two-party SQKD protocols are known at this point (see \cite{iqbal2020semi} for a survey); there have also been interesting developments in semi-quantum versions of QCKA protocols.  Much work has been devoted to so-called ``fair key agreement'' protocols \cite{xu2023improvement,hong2024multiparty,yang2024efficient,zhou2020three,xu2022single} where the goal is to distill a conference key where the goal is to distill a conference key where all parties contribute to the randomness of the final raw key.  This is unlike many standard QKD protocols (including the one we discuss here), where raw keys are typically produced through a direct measurement of a state prepared by one party (or a third-party adversary).  Fair key protocols typically involve several additional steps on top of the measurement to ensure randomness is added by all parties.  The protocol we consider in this paper is more like standard (S)QKD in that the raw key will be produced from a measurement of a GHZ state produced by one of the parties.  Ensuring randomness of the final secret key will be guaranteed, then, through standard privacy amplification.

Beyond this notion of fair semi-quantum key agreement, there have also been several ``standard''  semi-quantum QCKA protocols proposed, without the requirement that all parties contribute randomness to the final key similar to ours.  Some works involve protocols operating on cycle topologies whereby qubits travel from party to party in a cycle network~\cite{xian2009quantum,ye2023circular}.  Such protocols allow for a group key to be established with single qubits; however, the security analysis is difficult since Eve has multiple attempts to interact with the channel, and in general security of these protocols is only proven against certain attacks (\cite{ye2023circular} included an information-theoretic security proof, but only for the two-party version of their protocol).  Furthermore, reflection events may pick up a large amount of noise due to the multi-hop setting (a qubit must travel through all parties before returning to the original sender).  Other semi-quantum QCKA protocols have used cluster states of the form $\ket{0000} + \ket{0110} + \ket{1001} - \ket{1111}$ (up to normalization) for three-party key agreement~\cite{zhou2019multi} (an $n$-party extension was proposed also in that reference, however, it required a cycle network again).  An interesting use of GHZ states was proposed in~\cite{pan2024multi}, which allowed for the establishment of multiple keys between all pairs of parties.  GHZ states were also used in~\cite{zhu2018semi} for a two-party SQKD protocol.

\section{Preliminaries}
\label{sec:preliminaries}
By $A$, $B$, $B_i$, and $E$ we denote the relevant parties: Alice, Bobs, the i-th Bob, and Eve, respectively. Let $\rho_A$ be a quantum state over system $A$'s Hilbert space $\mathcal{H}_A$. Its von Neumann entropy is defined as $S(A)_{\rho_A} = -\text{Tr}(\rho_A\log_2\rho_A)$. Given a  bipartite state $\rho_{AE}$ over the corresponding Hilbert space $\mathcal{H}_{AE}$, its conditional von Neumann entropy is defined as $S(A|E)_{\rho_{AE}} = S(AE)_{\rho_{AE}} - S(E)_{\rho_{E}}$. It is well known that if both systems are classical (i.e., $\rho_{AE} = \sum p_{i,i'} \ket{i}\bra{i}_A\otimes \ket{i'}\bra{i'}_E$, where $\langle{i}|{i'}\rangle = \delta{ij}$), $S(A|E)_{\rho_{AE}} = H(A|E)$, where $H(A|E) = H(A,E) - H(E)$ is the conditional Shannon entropy of system $A$ given system $E$. 

Like QKD, QCKA protocols use quantum and authenticated classical channels to establish a \emph{raw key}. To establish a raw key of size $M$, the QCKA process requires $N \ge M$ rounds (in general, $N = \tau p_{\texttt{rk}}M$, where $\tau$ is the transmission coefficient, a probability that a qubit reaches its destination, and $p_{\texttt{rk}}$ is the probability that a round when a qubit reaches its destination is used to establish a bit of the raw key).  These raw keys are classical bit strings held by all the participants in the protocol, and due to noise and eavesdropping, they are partially correlated and only partly secret. To eliminate the errors, as mentioned previously one has to use an error correction protocol~\cite{bra:sal:94}; to turn it more private, one has to use a privacy amplification protocol on top of the raw key~\cite{ben:etal:95,ren:08}. These processes establish a final secret common key of $\ell$ bits. In this paper, we focus on bounding the \emph{effective key rate} $r = \ell/N$ in the asymptotic case of infinitely many rounds, i.e., when $N \rightarrow \infty$, using an adaptation of the Devetak-Winter key-rate equation~\cite{QKD-Winter-Keyrate,QKD-renner-keyrate}:
\begin{equation}
  r \geq S(A|E)_\rho - H(A|B).
\end{equation}
Note that the state $\rho$ used to establish the key on every round may depend on Alice, Bob's, and Eve's. In our case, as we are working with group-key protocols, we will use the following asymptotic version of the key rate from \cite{grasselli2018finite},
\begin{equation}\label{eq:key-rate}
    r \geq S(A|E)_\rho - \max_j H(A|B_j),
\end{equation}
where the maximum is taken over all of the Bobs. Namely, the key rate of a group-key protocol is simply the difference between Eve's uncertainty on Alice's register and the \emph{maximal} uncertainty between Alice's key bits and Bobs' (see \cite{grasselli2018finite}).  The second term is due to error correction leakage, and Alice must send an error-correcting code to fix the errors in the ``noisiest'' Bob (i.e., the Bob $B_j$ with the most uncertainty on Alice's key bit); see ~\cite{mur:etal:20} for more details. This value can be directly estimated using Alice and Bob's measurement results. The first term, quantum entropy, is usually the most challenging quantity to estimate in a (S)QKD security proof through the positivity of the key rate.  To bound the first term of Equation~\eqref{eq:key-rate} $S(A|E)_\rho$, we use the following result, proved in~\cite{QKD-Tom-Krawec-Arbitrary}, to estimate the von Neumann entropy: 

\begin{theorem}\label{thm:cq-entropy}
  (From \cite{QKD-Tom-Krawec-Arbitrary}): Let $\rho_{AE}$ be a quantum state of the form:
  \begin{equation}
    \rho_{AE} = \frac{1}{\mathcal{N}}\sum_{a = 0}^1\kb{a}_A\otimes\left(\sum_{i=1}^{K}\kb{E_i^a}_E\right),
  \end{equation}
where $\mathcal{N}$ is a normalization constant, $\ket{E_i^a}$ are in general neither normalized nor mutually orthogonal vectors. Then, the von Neumann entropy $S(A|E)_\rho$ may be bounded by
  \begin{eqnarray}\label{eq:S_bound}
    S(A|E)_\rho &\ge \ds\frac{1}{\mathcal{N}}\sum_{i=1}^{K}\left(\bk{E_i^0} + \bk{E_i^1}\right)
    \times\left[H\left(\frac{\bk{E_i^0}}{\bk{E_i^0} + \bk{E_i^1}}\right) - H(\lambda_i)\right],
  \end{eqnarray}
  \text{where}
  \begin{eqnarray}\label{eq:lambda}
    \lambda_i = \frac{1}{2}\left(1 + \frac{\sqrt{\big(\bk{E_i^0} - \bk{E_i^1}\big)^2 + 4\textnormal{\Real}^2\braket{E_i^0|E_i^1}}}{\bk{E_i^0} + \bk{E_i^1}}\right).
  \end{eqnarray}
where $H(\lambda)$ is the Shannon entropy of the binary distribution $\{\lambda, 1-\lambda\}$ and $E_i^0$ and $E_i^0$ are eavesdroppers subspaces for Alice's outputs 0 and 1, respectively.
\end{theorem}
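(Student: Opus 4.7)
My plan is to lift $\rho_{AE}$ to a larger state $\rho_{AEI}$ that explicitly records the summation index $i$ in a classical register $I$ appended to Eve's side, namely
\begin{equation*}
\rho_{AEI} = \frac{1}{\mathcal{N}}\sum_{a,i} \kb{a}_A \otimes \kb{E_i^a}_E \otimes \kb{i}_I.
\end{equation*}
Its marginal on $AE$ is precisely the given $\rho_{AE}$, and since providing Eve with additional (classical) side information can only decrease her conditional entropy about $A$, one has $S(A|E)_\rho \geq S(A|EI)_{\rho_{AEI}}$. Classicality of $I$ then gives
\begin{equation*}
S(A|EI) = \sum_i \frac{\bk{E_i^0}+\bk{E_i^1}}{\mathcal{N}}\, S(A|E)_{\rho_i},
\end{equation*}
where $\rho_i^{AE} \propto \kb{0}_A \otimes \kb{E_i^0} + \kb{1}_A \otimes \kb{E_i^1}$, so it suffices to lower bound $S(A|E)_{\rho_i}$ for each $i$.

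\textbf{Per-index computation.} For each $i$, Eve's two conditional states are pure, so the standard cq-state identity yields $S(A|E)_{\rho_i} = H(p_i) - S(\rho_i^E)$, with $p_i = \bk{E_i^0}/(\bk{E_i^0}+\bk{E_i^1})$ and $\rho_i^E = (\kb{E_i^0}+\kb{E_i^1})/(\bk{E_i^0}+\bk{E_i^1})$. Since $\rho_i^E$ has rank at most two, its nonzero eigenvalues coincide with those of the $2\times 2$ Gram matrix $G_{ab} = \braket{E_i^a|E_i^b}$, via the standard fact that $VV^\dagger$ and $V^\dagger V$ share nonzero spectrum applied to $V:\ket{a}\mapsto\ket{E_i^a}$. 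A direct trace-determinant computation, using $\det G = \bk{E_i^0}\bk{E_i^1}-|\braket{E_i^0|E_i^1}|^2$, yields nonzero eigenvalues $\tfrac{1}{2}(1\pm\Delta_i)$, where $\Delta_i = \sqrt{(\bk{E_i^0}-\bk{E_i^1})^2 + 4|\braket{E_i^0|E_i^1}|^2}/(\bk{E_i^0}+\bk{E_i^1})$.

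\textbf{Weakening and assembly.} This gives the sharp identity $S(\rho_i^E) = H(\tfrac{1}{2}(1+\Delta_i))$ with the modulus squared. To arrive at the stated bound, which uses $\Real^2\braket{E_i^0|E_i^1}$, I invoke $|\braket{E_i^0|E_i^1}|^2 \geq \Real^2\braket{E_i^0|E_i^1}$, so the $\lambda_i$ of the theorem satisfies $\tfrac{1}{2} \leq \lambda_i \leq \tfrac{1}{2}(1+\Delta_i)$; since $H$ is decreasing on $[1/2,1]$, $H(\lambda_i) \geq S(\rho_i^E)$, whence $S(A|E)_{\rho_i} \geq H(p_i) - H(\lambda_i)$. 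Summing with weights $(\bk{E_i^0}+\bk{E_i^1})/\mathcal{N}$ recovers the stated inequality. The only substantive step is the eigenvalue computation for $\rho_i^E$; the remaining moves---appending the classical index register, invoking conditioning-reduces-entropy, and weakening $|\cdot|^2 \to \Real^2$---are routine bookkeeping, though one should verify that both $\lambda_i$ and $\tfrac{1}{2}(1+\Delta_i)$ lie in $[1/2,1]$ so that the final use of monotonicity of $H$ is legitimate (automatic since both are of the form $\tfrac{1}{2}(1+\text{nonneg})$).
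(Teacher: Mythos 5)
Your proof is correct; note that the paper itself states this theorem without proof, importing it from the cited reference, and your argument (append a classical index register, invoke conditioning-reduces-entropy, compute each rank-two block's entropy via the Gram-matrix eigenvalues, then weaken $|\braket{E_i^0|E_i^1}|^2$ to $\Real^2\braket{E_i^0|E_i^1}$ using monotonicity of $H$ on $[1/2,1]$) is essentially the same derivation given in that source. The one step worth stating explicitly is that the weakening goes the right way: $H(\lambda_i)\ge S(\rho_i^E)$ makes the subtracted term larger, so the lower bound remains valid, which you correctly verify.
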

Note that, in the above, the order of the summation of the terms in $\rho_{AE}$ does not matter - all orderings will give a lower-bound.  Thus, technically, one should maximize the right-hand-side of Equation \ref{eq:S_bound} over all possible orderings, or permutations, of Eve's states.  We will do this later when evaluating the key-rate expression.

In the remainder of the paper, we use GHZ-type states~\cite{ghz} of the form
\begin{equation}
\label{eq:ghz-like}
    \ket{g^{(n+1)}(x,y)} = \frac{1}{\sqrt{2}}\ket{0}\ket{y} + \frac{(-1)^x}{\sqrt{2}}\ket{1}\ket{\bar{y}},
\end{equation} 
where $x\in\{0,1\}$ and $y\in\{0,1\}^{n}$.  Notice that the set of these states is called the GHZ basis over $(n+1)$ qubits. Here, $\bar{y}$ is the bit-wise complement of $y$.

\section{The proposed semi-quantum conference key agreement protocol}
\label{sec:proposal}

Our proposal combines SQKD and QCKA to develop a multiparty key distribution for semi-quantum contexts. We assume that there are $n$ ``classical'' Bobs and one fully quantum Alice. Our protocol consists of $N$ rounds (for simplicity, we  assume that qubits do not get absorbed during transmission, i.e., in our ideal case $\tau = 1$, and we have that the size of teh raw key is $M = p_{\texttt{rk}} N$).  Each round of the protocol starts with Alice who prepares $(n+1)$ qubits in a GHZ-like state of the form~\eqref{eq:ghz-like}. She then keeps one qubit with her (called the zeroth qubit), while sending one qubit to each Bob $B_i$, with $i \in \{ 1,2, \dots n \}$.  Like all the other semi-quantum proposals mentioned above, all parties pre-share a secret key. This key allows them to share a random $N$-bit string  ${\bf\Theta} = ({\bf\Theta}_1{\bf\Theta}_2 \dots {\bf\Theta}_N)$. In each round $j \in \{1,2, \dots N\}$, if ${\bf\Theta}_j = 1$, all Bobs \texttt{Measure and Resend} their qubits, i.e., each Bob measures his corresponding qubit in the computational (``classical'') $Z$ basis and resends it back to Alice, who also measures her qubit in the $Z$ basis, thus establishing a bit of the raw key between the $(n+1)$ parties (in order to check the effects of noise and eavesdropping, she additionally measures all qubits received from Bobs in the $Z$ basis and compare the results with her own); otherwise, all Bobs \texttt{Reflect} their corresponding qubits back to Alice who checks whether the $(n+1)$ qubits are still in the original GHZ-like state (she performs a quantum measurement by projecting onto the GHZ-like state). Note that in practical implementations, the qubits are encoded in photon states, and photons get absorbed when measured. Thus, to resend a measured photon each Bob prepares a new one in the state corresponding to the measurement outcome. A single round $j \in \{1,2, \dots N\}$ of the protocol is depicted in Figure~\ref{fig:schematic_semi_quantum_conference_key_agreeement}.

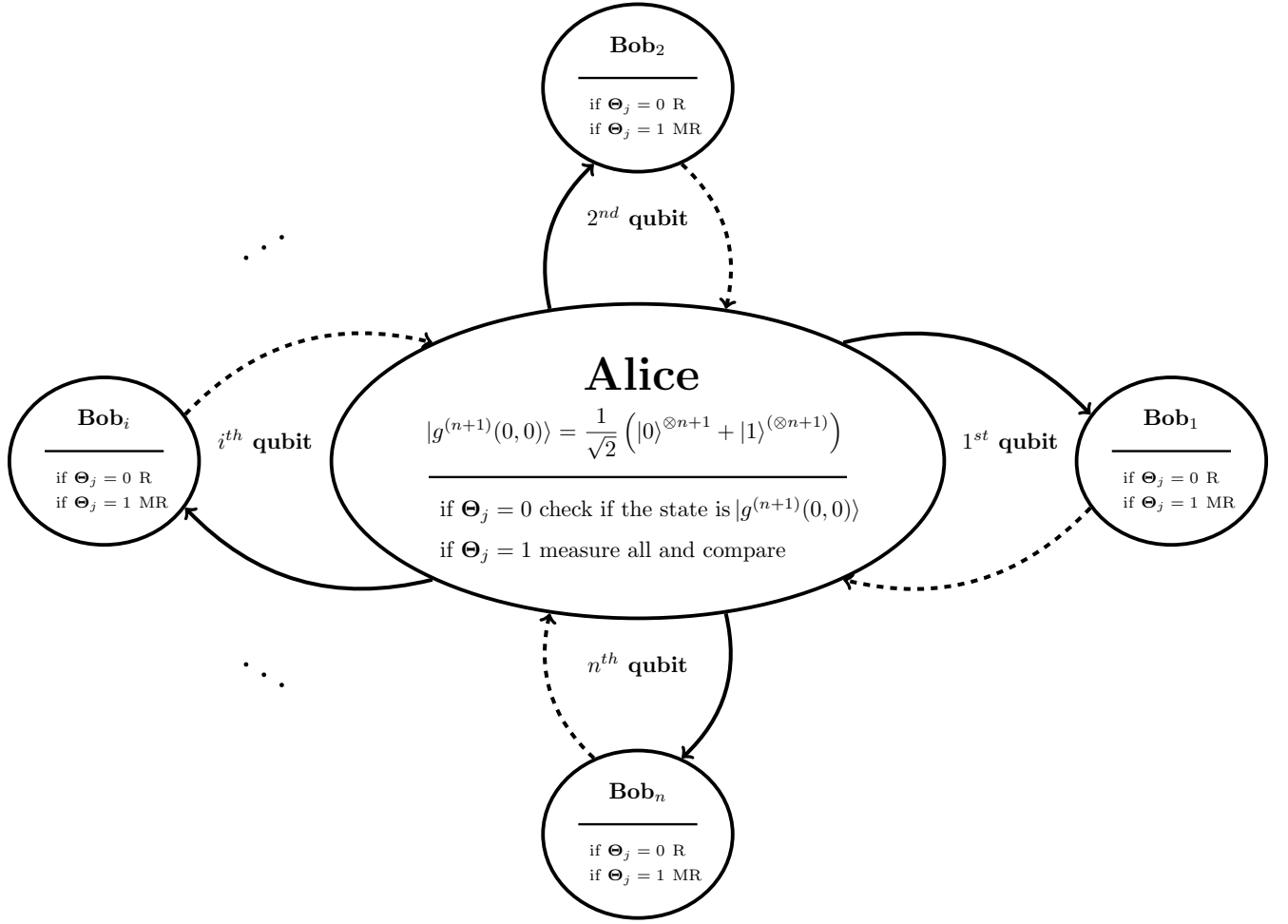
\begin{figure}[!th]
\begin{center}
\scalebox{.9}{
\begin{tikzpicture}[node distance=2cm] 
   \node[ellipse, draw, ultra thick, text width = 6.5cm, align=center] (alice)  
   {
       {
           {\bf\huge Alice}\\[2.5mm]
           \hspace{-1mm}$ \ket{g^{(n+1)}(0,0)}= \ds\frac{1}{\sqrt 2} \left(\ket{0}^{\otimes n+1} + \ket{1}^{(\otimes n+1)}\right)$\\[2mm]
           \rule{\textwidth}{1pt}\\[2.5mm]  
           {$\begin{array}{ll} \textrm{if} \!\!\!\!
        & {\bf\Theta}_j = 0 \textrm{ check if the state is}  \ket{g^{(n+1)}(0,0)}\\[2mm]
        \textrm{if} \!\!\!\!
        & {\bf\Theta}_j = 1 \textrm{ measure all and compare}
        \end{array}$}
       }
    };

 \node[draw=none,rectangle,below left=2.5cm of alice, rotate=-30] {{\huge$\dots$}};
 
  \node[draw=none,rectangle,above left=2.5cm of alice, rotate=30] {{\huge$\dots$}};
 
   \node[ellipse, draw, ultra thick,text width = 1.85cm, align=center] (bob1)  [right=of alice] 
   {$\textbf{Bob}_1$\\
   
   \rule{\textwidth}{1pt}\\[2.5mm]  
       {
        \scriptsize $\begin{array}{ll}
        \textrm{if} & \!\!\!\!\! {\bf\Theta}_j = 0 \ \textrm{R}\\[1mm]
        \textrm{if} & \!\!\!\!\! {\bf\Theta}_j = 1 \ \textrm{MR}
        \end{array}$
       }
    };  
   \node[ellipse, draw, ultra thick,text width = 1.85cm, align=center ] (bob2)  [above=2cm of alice] 
   { $\textbf{Bob}_2$\\
   
  \rule{\textwidth}{1pt}\\[2.5mm]  
       {
        \scriptsize $\begin{array}{ll}
        \textrm{if} & \!\!\!\!\! {\bf\Theta}_j = 0 \ \textrm{R}\\[1mm]
        \textrm{if} & \!\!\!\!\! {\bf\Theta}_j = 1 \ \textrm{MR}
        \end{array}$
       }
    };  
     
   \node[ellipse, draw, ultra thick,text width = 1.85cm, align=center ] (bob3)  [left= of alice] 
   {{\bf $\textbf{Bob}_i$}\\
   
   \rule{\textwidth}{1pt}\\[2.5mm]  
       {
        \scriptsize $\begin{array}{ll}
        \textrm{if} & \!\!\!\!\! {\bf\Theta}_j = 0 \ \textrm{R}\\[1mm]
        \textrm{if} & \!\!\!\!\! {\bf\Theta}_j = 1 \ \textrm{MR}
        \end{array}$
       }
    };

   \node[ellipse, draw, ultra thick,text width = 1.85cm, align=center ] (bob4)  [below=2cm of alice] 
   {$\textbf{Bob}_n$\\
   
   \rule{\textwidth}{1pt}\\[2.5mm]  
   
       {
        \scriptsize $\begin{array}{ll}
        \textrm{if} & \!\!\!\!\! {\bf\Theta}_j = 0 \ \textrm{R}\\[1mm]
        \textrm{if} & \!\!\!\!\! {\bf\Theta}_j = 1 \ \textrm{MR}
        \end{array}$
       }
    }; 

\path[->] 
(alice)   edge[bend left, anchor=center, above,  ->,line width=1.7pt] node[text width = 2cm, anchor = center, above] {} (bob1)
 
(alice)   edge[bend left, anchor=center, above, ->,line width=1.7pt] node[sloped, anchor = center, above] {} (bob2)
  
(alice)   edge[bend left, anchor=center, above, ->,line width=1.7pt] node[sloped, anchor = center, above] {} (bob3)
    
 (alice)   edge[bend left, anchor=center, above, ->,line width=1.7pt] node[sloped, anchor = center, above] {} (bob4)
 
(alice)   edge[swap, draw = none, above] node[sloped, anchor = center, above] {{\bf$1^{st}$ qubit} } (bob1)

(alice)   edge[swap, draw = none, above] node[sloped, anchor = center, rotate = -90, above] {{\bf $2^{nd}$ qubit} } (bob2)

(alice)   edge[swap, draw = none, above] node[sloped, anchor = center, above] {{\bf $i^{th}$ qubit }} (bob3)

(alice)   edge[swap, draw = none, above] node[sloped, anchor = center, rotate = 90, above] {{\bf $n^{th}$ qubit} } (bob4)
 
(bob1)   edge[->,line width=1.7pt, bend left,  anchor=center, above, dashed] node[sloped, text width = 4.1cm, anchor = center, below] {} (alice)

(bob2)   edge[->,line width=1.7pt, bend left,  anchor=center, above, dashed] node[sloped, text width = 4.1cm, anchor = center, below] {} (alice)
     
(bob3)   edge[->,line width=1.7pt, bend left,  anchor=center, above, dashed] node[sloped, text width = 4.1cm, anchor = center, below] {} (alice)
     
(bob4)   edge[->,line width=1.7pt, bend left,  anchor=center, above, dashed]  node[sloped, text width = 4.1cm, anchor = center, below] {} (alice)
     ;
\end{tikzpicture}
}
\end{center}
\caption{Schematic drawing of our proposal for a single round. In the beginning, Alice prepares a GHZ state with $n$+1 particles and sends one particle of that state to each of $n$ Bob's (full arrows). 
Using a pre-shared key, each Bob decides to Reflect (R) the state or Measure and Resend (MR) back the outcome as a state (dashed arrows). When Alice receives all the states from Bob, she either confirms that she has GHZ state in the first case or measures her qubit and compares it to all the other outcomes of measuring the remaining ones in the computational basis.}
\label{fig:schematic_semi_quantum_conference_key_agreeement}
\end{figure}

Below, we present in detail our SQCKA protocol.

\begin{protocol}[\texttt{SQCKA}]\label{pro:1}
\ \\

    \textbf{\textit{Setup}:}
    \begin{enumerate}
        \item Number of parties $(n+1)$, one fully quantum Alice and $n$ classical Bobs.
        \item $N$, number of rounds.
        \item A pre-shared key (of size $<N$) that all the parties extend to $N$ bits.
    \end{enumerate}
    
    \textbf{\textit{Steps}:}
    
    \begin{enumerate}
       \item Alice prepares $\ket{g^{(n+1)}(0,\vec 0)}^{\otimes N}$, with $\vec 0 \equiv (00\dots 0)$ being a string of $n$ zeroes (and analogously for~$\vec 1)$. For each round $j \in \{1,2, \dots N\}$ she keeps the zeroth qubit with herself, and sends each qubit $i \in \{1,2, \dots n\}$ to the corresponding Bob $B_i$.
        
        \item Alice and Bobs share a random $N$-bit string  ${\bf\Theta} = ({\bf\Theta}_1{\bf\Theta}_2 \dots {\bf\Theta}_N)$ obtained using their secret pre-shared random string.

        \item In each round $j \in \{1,2, \dots N\}$, if ${\bf\Theta}_j = 1$, each Bob measures the received qubit in the $Z$ basis and resends it to Alice (called a \texttt{SIFT} round); otherwise, all Bobs reflect their qubits back to Alice (called a \texttt{CTRL} round). 

    	\item In each round $j \in \{1,2, \dots N\}$, upon receiving back $n$ qubits from Bobs, Alice performs her measurement. If ${\bf\Theta}_j = 1$, she measures her zeroth qubit, as well as each of Bobs' $n$ qubits, in the $Z$ basis, and compares the results (in the ideal noiseless case all parties should obtain the same $(n+1)$ results -- Alice's single-qubit results, and one result per each Bob from their measurements depicted in the above point 3 -- and therefore share a bit of the raw key); otherwise, she projects all the $(n+1)$ qubits onto the GHZ-like state $\ket{g^{(n+1)}(0, \vec 0)}$ (and should verify the projection with certainty in the ideal noiseless~case).
\end{enumerate}
\end{protocol}

To optimize the resources, it is desirable to decrease the size of the pre-shared key, such that it is shorter than the length $N$ of the string ${\bf\Theta}$. There are standard techniques to achieve this goal. A possible way to do this is by biasing relative frequencies of bits from string ${\bf\Theta}$, such that ${\bf\Theta}_j = 1$ occurs more often than ${\bf\Theta}_j = 0$ -- since ${\bf\Theta}_j=0$ rounds are used to check eavesdropping and perform parameter estimation, by fixing security level $\varepsilon$, one can define the number of parameter estimation rounds $\ell (\varepsilon, N)$, such that when  $N \rightarrow \infty$ we have $\ell (\varepsilon,N)/N \rightarrow 0$, that is, $\ell (\varepsilon, N)\in o(N)$. By encoding the indices $j$ to be used for the \texttt{CTRL} rounds only (for which ${\bf\Theta}_j = 0$), one thus needs a pre-shared key considerably shorter than the total number of rounds $N$ (note that by knowing the positions of the \texttt{CTRL} rounds one automatically knows the positions of the \texttt{SIFT} rounds, thus knowing the whole string ${\bf\Theta}$). One can further improve this by allowing a variable security level that would increase with $N$. For example, one can choose to have $\sqrt N$ \texttt{CTRL} rounds, costing $\sqrt N{\log}_{2}(N) << N'< N$ bits of the pre-shared key, where $N'$ is the size of the final secure key obtained after classical post-processing, namely privacy amplification and information reconciliation.

Upon performing the above step of the protocol, that involve quantum operations, they proceed with a fully classical final stage of the protocol. This consists of two parts. The first is parameter estimation, for which the results from the $\texttt{CTRL}$ rounds are used, as well as some from the $\texttt{SIFT}$ rounds. The latter are chosen by cut-and-choose technique, in the same manner as in the standard QKD protocols. The second part of the classical information post-processing is privacy amplification and information reconciliation (error correction) performed on the results obtained in the remaining $\texttt{SIFT}$ rounds, resulting in the final secure key. Since this classical post-processing stage is studied in the literature independently from quantum cryptography, as it is equally applicable to classical communication schemes that well predate quantum, we omit its detailed analysis, as it is out of the scope of the current study.

It is straightforward to show that our protocol is sound, i.e., in the case of no eavesdropping and no noise, the parties will share a common key at the end of the protocol.

\begin{theorem}
Protocol \ref{pro:1} is sound.
\end{theorem}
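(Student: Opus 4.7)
The plan is to interpret ``sound'' in the adversarial sense: an eavesdropper Eve --- possibly colluding with a subset of dishonest Bobs --- cannot simultaneously pass the CTRL-round checks and induce disagreement between Alice's raw-key bit and the raw-key bit of any honest Bob in a SIFT round. Since the qubits traverse the channel twice, I would model Eve's most general collective attack as a pair of isometries $V_1,V_2$ acting on a joint ancilla $E$, applied on the forward and return legs respectively, and then invoke the asymptotic collective-attack regime adopted in the paper to lift the per-round estimate to the full $N$-round protocol via a standard Chernoff bound on the $o(N)$ parameter-estimation sample.

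The central computational move is a GHZ-basis expansion of the post-attack state. Writing
\begin{equation*}
V_1\bigl(\ket{g^{(n+1)}(0,\vec 0)}\otimes\ket{0}_E\bigr)=\sum_{x\in\{0,1\},\,y\in\{0,1\}^n}\ket{g^{(n+1)}(x,y)}\otimes\ket{E_{x,y}},
\end{equation*}
with unnormalized vectors $\ket{E_{x,y}}$, reduces the whole analysis to bookkeeping on the weights $\bk{E_{x,y}}$, because Alice's two possible tests are exactly diagonal in this basis. In a CTRL round Bobs reflect and Alice projects onto $\ket{g^{(n+1)}(0,\vec 0)}$ after the return pass $V_2$; the acceptance probability is a convex combination of overlaps $|\bra{g^{(n+1)}(0,\vec 0)}V_2\ket{g^{(n+1)}(x,y)}|^2$ weighted by $\bk{E_{x,y}}$, so acceptance with probability $1-\delta$ forces almost all of this weight onto the $(0,\vec 0)$ branch and pins $V_2$ near the identity on that branch. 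In a SIFT round the Bobs measure in $Z$ immediately after $V_1$; using the explicit $Z$-basis form of $\ket{g^{(n+1)}(x,y)}$, the probability that Bob~$B_i$'s outcome differs from Alice's zeroth-qubit outcome is exactly $\sum_{(x,y):\,y_i\neq 0}\bk{E_{x,y}}$, and this is in turn bounded by $\delta$ because every such branch lies outside $\ket{g^{(n+1)}(0,\vec 0)}$. Combining the two estimates gives the per-round implication ``CTRL acceptance $\ge 1-\delta$ $\Rightarrow$ Alice--$B_i$ SIFT disagreement $\le\delta$ for each honest $i$'', which is the quantitative content of soundness.

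The hard part will be the joint treatment of the return-pass isometry $V_2$ and of dishonest Bobs. A malicious Bob can, on a SIFT round, prepare any computational-basis state to send back and still look consistent in Alice's comparison, while $V_2$ can in principle ``repair'' forward-pass damage just before Alice's CTRL projection; to handle both cleanly I would purify every dishonest Bob into Eve's ancilla and then argue, using the rigidity of the GHZ state under local operations that preserve its projection onto $\ket{g^{(n+1)}(0,\vec 0)}$, that any $V_2$ powerful enough to keep CTRL acceptance high on most branches must have acted essentially as the identity on the honest Bobs' registers. The resulting soundness guarantee is therefore naturally phrased relative to the honest subset of parties rather than to all $n$ Bobs, and the technical heart of the proof is precisely the reduction from the two-pass, multi-Bob attack to the single-sum bound on $\sum_{(x,y)\neq(0,\vec 0)}\bk{E_{x,y}}$ sketched above.
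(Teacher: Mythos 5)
Your proposal proves a different statement from the one the paper calls ``soundness.'' Immediately before the theorem the paper fixes the meaning: \emph{``in the case of no eavesdropping and no noise, the parties will share a common key at the end of the protocol.''} This is a pure correctness/completeness claim about the honest, noiseless execution, and the paper's proof is correspondingly a two-line verification: on a \texttt{CTRL} round the reflected state is still $\ket{g^{(n+1)}(0,\vec 0)}$, so Alice's projection succeeds with certainty; on a \texttt{SIFT} round the $Z$-basis measurements of a GHZ state are perfectly correlated (all $0$'s or all $1$'s), so all $n+1$ parties record the same bit. The adversarial guarantee you set out to prove --- that high \texttt{CTRL} acceptance forces low Alice--Bob disagreement against an attacking Eve --- is not what this theorem asserts; that content lives in the security analysis of Sections~4 and~5, where the paper handles it quantitatively through the Devetak--Winter rate, the entropy bound of Theorem~\ref{thm:cq-entropy}, and explicit parameter estimation of the overlaps $\Real\innerprod{\widetilde E_{00}}{\widetilde E_{11}}$, rather than through a qualitative ``pass the test $\Rightarrow$ agree'' implication.

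Even taken on its own terms, your sketch has a genuine gap at the step you yourself flag as ``the hard part.'' The \texttt{CTRL} test is applied \emph{after} the return-pass isometry $V_2$, while the Bobs' \texttt{SIFT} measurements occur \emph{after only} $V_1$; so an acceptance probability of $1-\delta$ on \texttt{CTRL} rounds constrains the state $V_2 V_1\ket{\cdot}$, not the state $V_1\ket{\cdot}$ on which the disagreement bound $\sum_{(x,y):\,y_i\neq 0}\bk{E_{x,y}}$ is computed. Your proposed fix --- invoking ``rigidity of the GHZ state under local operations'' --- does not apply, because $V_2$ is a \emph{global} operation on all $n$ transferring qubits jointly with Eve's ancilla, and Eve can correlate $V_2$ with the record her ancilla keeps of what $V_1$ did, precisely the repair strategy you mention. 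Closing this requires the kind of explicit bookkeeping the paper does in Section~\ref{sec:parameter_estimation}, where both passes are parametrized ($p(b|a)$ and $p'(b'|ab)$) and the observable $p_{GHZ}$ is related to the specific overlap $\Real\innerprod{\widetilde E_{00}}{\widetilde E_{11}}$ that enters the entropy bound; a clean $\delta\mapsto\delta$ implication of the form you state is not available. For the theorem as actually posed, none of this machinery is needed: set $V_1=V_2=I$ and observe the two ideal cases directly.
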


\begin{proof}
Since there is no interference from the environment, we know that the state of the quantum registers at the beginning of the protocol is:
\begin{equation}
    \ket{g^{(n+1)}(0, \vec 0)} = \frac{1}{\sqrt 2} (\ket{0\cdots 0} + \ket{1\cdots 1}).
\end{equation}
Alice sends a particle $i$  of this entangled state to the $i$-th Bob.

Let us first consider the \texttt{CTRL} cases, i.e., the cases where ${\bf\Theta}_j = 0$, and hence that at the $j$-th iteration of the protocol, all Bobs reflect the received qubits to Alice. Since there is no noise, by all Bobs reflecting, it means that when Alice collects all the qubits, the entire system's state is again the $\ket{g^{(n+1)}(0,\vec 0)}$ state. Alice can confirm that this is the case by performing, for example, a fidelity test.

In the second case, i.e., in the case of a \texttt{SIFT} round, each Bob will measure the qubit in the computational basis. Since the state is a GHZ state and the measurement basis is the same for all Bobs, the outcomes are all equal. Therefore, the state that Alice should receive back from all Bobs is either all 0's or all 1's. By measuring in the computational basis all the qubits received and her own qubit, she can verify what bit should be added to the raw key. 
\end{proof}

In the following sections, we discuss, in detail, the security proof of our protocol in a realistic scenario where Eve, the eavesdropper, may introduce noise into the system.

\section{Security proof}
\label{sec:security}
In our security analysis, the eavesdropper is restricted to collective attacks, i.e., the attacks are made on each iteration of the protocol independently and identically, but it is possible to perform a joint measurement of private ancillary. Note that this assumption can be removed using a {\em de Finetti-type} argument~\cite{QKD-renner-keyrate,chr:kon:ren:09,ren:07}.

To simplify the argument of the security proof, we consider a single round (out of $N$) of the protocol in which Alice sends $n$ transferring qubits, one to each Bob. Each Bob $B_i$, with $i \in \{ 1, 2, \dots n \}$, performs one of the two operations on his qubit, depending on the value of the bit $\Theta_{B_i}$, and sends his transferring qubits back to Alice. After receiving Bobs qubits, Alice performs the $GHZ/Z$-measurements, according to the value of her bit $\Theta_A$. Eve can entangle with the transferring qubits twice, once in the forward direction and again in the backward direction, using her register $E$. Bobs and Alice record their measurement results in their respective registers $B$ and $M_A$.

For simplicity and reader's reference, we will be using the following notation:
\begin{itemize}
	\item {\em Indices}: Throughout the manuscript, the ranges of all indices are explicitly denoted when used. Typically, indices $i \in \{ 1, 2, \dots n \}$ and $j \in \{ 1, 2, \dots N \}$ label Bobs ($n$ of them) and rounds ($N$ of them), respectively. Index $a$ denotes a bit, $a \in \{ 0,1 \}$, while indices $b,b',c$ and $c'$ denote $n$-bit strings, $b,b',c,c' \in \{ 0, 1, \dots 2^n - 1 \}$. Exceptionally, in particular in Section~\ref{sec:depolarising_channel}, indices $i,j$ and $k$  denote $n$-bit strings as well, $i,j,k \in \{ 0, 1, \dots 2^n - 1 \}$. Finally, the $n$-bit strings labeling the two prominent GHZ states are denoted as $\vec a \equiv (aa\dots a)$, with $a \in \{ 0,1 \}$.
    \item $A$: Alice's qubit (in her lab all the time).
    \item $T$: The transferring $n$ qubits (from Alice to Bobs and back to Alice). The initial state of the $AT$ register is $\ket{g^{(n+1)}(0,\vec 0)}_{AT}$.
    \item $B$: Bobs' memory -- $n$ qubits, initially in the state $\ket{\vec{0}}_B = \ket{0}_{B_1} \cdots \ket{0}_{B_n}$. Analogously, we define states $\ket{b}_{T} \equiv \ket{b_1}_{T_1}\ket{b_2}_{T_2}\dots \ket{b_n}_{T_n}$ for arbitrary bit string $b = (b_1b_2\dots b_n)$. 
    \item $\Theta = (\Theta_A\Theta_B)$: Alice's bit $\Theta_A$ and Bobs' $n$ bits $\Theta_B = (\Theta_{B_1}\Theta_{B_2}\dots \Theta_{B_n})$ determining the Agents' action ($\Theta_A = 0$ for Reflection and $\Theta_A = 1 $ for Measure and Resend, and analogously for $\Theta_{B_i}$) in a {\em single} run of the protocol, say run $j \in \{1,2, \dots N \}$ (for simplicity, instead of $\Theta^{(B_i)}_j$, we write $\Theta_{B_i}$, etc.). Thus, we have $\Theta_A = \Theta_{B_1} = \Theta_{B_2} = \dots = \Theta_{B_n} = {\bf\Theta}_j$. Different labels $A$ and $B_i$ are introduced to denote different physical systems in which the bit value ${\bf\Theta}_j$ is stored for different agents (typically, one uses classical systems to store bit values $\Theta_A$ and $\Theta_{B_i}$; here, we assume a more general case of quantum systems, either microscopic qubits or macroscopic stable memories). eavesdropping. 
    
        In addition, by $\Theta = (\Theta_A\Theta_B)$ we denote Hilbert spaces of Alice's and Bobs' corresponding qubits. Note that for simplicity we slightly abuse the notation, as here $\Theta_A$ and $\Theta_B$ are not bit(-string)s but Hilbert space labels. Thus, the initial $(n+1)$-qubit state determining the agents' action is:
    \begin{eqnarray}
        \ket{\varphi_0}_\Theta &=& \ket{g^{(n+1)}(0,\vec 0)}_\Theta \nonumber\\ &=& \frac{1}{\sqrt 2} \left(\ket 0_{\Theta_A}\ket 0_{\Theta_{B_1}}\cdots \ket 0_{\Theta_{B_n}} + \ket 1_{\Theta_A}\ket 1_{\Theta_{B_1}}\cdots \ket 1_{\Theta_{B_n}}\right) \equiv \frac{1}{\sqrt 2} \left(\ket{0\vec{0}}_{\Theta} + \ket{1\vec{1}}_{\Theta}\right).
    \end{eqnarray}

    \item $E$: Eve's register in default initial state $\ket {\Omega}_{E}$.

    \item $M_A$: Alice's memory. $M_A = M_A^{(g)}M_A^{(01\cdots n)}$; where $M_A^{(g)}$ represents a qubit that stores the result of the $GHZ$-measurement, while $M_A^{(01\cdots n)}$ represent $(n+1)$ qubits that store the results of the $Z$ measurements.
\end{itemize}

\subsection{State evolution}
\label{sec:state_evolution}

The first step of the evolution of the initial state is given by (for the time being, we omit Alice's memory~$M_A$)
\begin{eqnarray}
    \ket{\psi_0}_{A T E B \Theta} &=& \frac{1}{\sqrt 2} \left(\ket 0_{A}\ket {\vec 0}_{T} +  \ket 1_{A}\ket {\vec1}_{T}\right)\ket{\Omega}_E\ket{\vec 0}_{B} \ket{\varphi_0}_\Theta \nonumber\\
    &\mapsto& I_A \otimes U^{(1)}_{TE} \otimes I_{B} \otimes I_{\Theta} \ket{\psi_0}_{A T E B \Theta} \equiv \ket{\psi_1}_{A T E B \Theta},
    \label{eq:forward_action}
\end{eqnarray}
where Eve's entangling action $U^{(1)}_{TE}$, defined here only on relevant vectors, is
\begin{eqnarray}
\label{eq:eve_forward_unitary}
    U^{(1)}_{TE} \left( \ket{\vec 0}_T \ket{\Omega}_{E} \right) &=& \sum_{b=0}^{2^n-1} \sqrt{p(b | 0)} \ket{ b}_T \ket{E_{0b}}_{E}\nonumber\\
    U^{(1)}_{TE} \left( \ket{\vec 1}_T \ket{\Omega}_{E} \right) &=& \sum_{b=0}^{2^n-1} \sqrt{p(b | 1)} \ket{ b}_T \ket{E_{1b}}_{E}.
\end{eqnarray}
Note that by assumption vectors $\ket{ b}_T$ represent an orthonormal basis, but while $\ket{E_{0b}}_{E}$ and $\ket{E_{1b}}_{E}$ are normalized to $1$, they are, in general, {\em not orthogonal} to each other. 

Thus, we have the following state upon transferring qubits reached Bobs, and their action (Reflect, or Measure and Resend), described as
\begin{eqnarray}
\label{eq:bob_action}
    \ket{\psi_1}_{A T E B \Theta} &=&  \frac{1}{\sqrt 2} \left(\ket 0_{A}\sum_{b=0}^{2^n-1} \sqrt{p(b | 0)} \ket{ b}_T \ket{E_{0b}}_{E}+  \ket 1_{A}\sum_{b=0}^{2^n-1} \sqrt{p(b | 1)} \ket{b}_T \ket{E_{1b}}_{E}\right)\ket{\vec 0}_{B} \ket{\varphi_0}_\Theta \nonumber\\
    &\mapsto& I_A \otimes I_{\Theta_A} \otimes V_{TB\Theta_B} \otimes I_{E} \ket{\psi_1}_{A T E B \Theta} \equiv \ket{\psi_2}_{A T E B \Theta}.
\end{eqnarray}
As above, Bobs' action $V_{TB\Theta_B}$ is defined only on relevant subspaces
\begin{eqnarray}
\label{eq:bob_unitary}
    V_{TB\Theta_B} &=& \ket{\vec 0}_{\Theta_B} \bra{\vec 0} \otimes I_{TB} + \ket{\vec 1}_{\Theta_B} \bra{\vec 1} \otimes \left[\sum_{b=0}^{2^n-1} \ket{b}_T \bra{b} \otimes \big(\ket b_B\bra{\vec 0}  + h.c. \big)\right].
\end{eqnarray}
Thus, the overall state upon Bobs' operation is (note the above mentioned notation ambiguity, as $\Theta$ here denotes both a bit value, as well as the agents' corresponding Hilbert spaces; also, to simplify the expressions, we assumed that both Bobs' actions are equally probable, even though, as noted above, the $\Theta = 1$ case of raw key generation is in practice overwhelmingly more probable)
\begin{eqnarray}
\label{eq:bob_action_state}
    \ket{\psi_2}_{A T E B \Theta} &=&  \frac{1}{\sqrt 2} 
    \left(
    \ket{0\vec 0}_{\Theta}\frac{1}{\sqrt 2} \left[ \ket 0_{A}\sum_{b=0}^{2^n-1} \sqrt{p(b | 0)} \ket{ b}_T \ket 0_{B}\ket{E_{0b}}_{E} + \ket 1_{A}\sum_{b=0}^{2^n-1} \sqrt{p(b | 1)} \ket{ b}_T \ket 0_{B}\ket{E_{1b}}_{E} \right]\right. \nonumber \\
    & &\quad\quad + \left.   
    \ket{1\vec 1}_{\Theta}\frac{1}{\sqrt 2} \left[ \ket 0_{A}\sum_{b=0}^{2^n-1} \sqrt{p(b | 0)} \ket{ b}_T \ket b_{B}\ket{E_{0b}}_{E} + \ket 1_{A}\sum_{b=0}^{2^n-1} \sqrt{p(b | 1)} \ket{ b}_T \ket b_{B}\ket{E_{1b}}_{E} \right]
    \right)\nonumber\\
    &=& 
    \frac{1}{\sqrt 2} \sum_{\Theta = 0}^1 \ket{\Theta\vec{\Theta}}_\Theta\frac{1}{\sqrt 2} \sum_{a = 0}^1 \ket{a}_A \sum_{b=0}^{2^n -1} \sqrt{p(b | a)} \ket{b}_T \ket {b\e \vec{\Theta}}_{B} \ket{E_{ab}}_{E}.
\end{eqnarray}
Upon sending back the transferring qubits to Alice, Eve entangles again, resulting in the following state
\begin{eqnarray}
    \ket{\psi_3}_{A T E B \Theta} &=&  I_\Theta \otimes I_A \otimes I_{B} \otimes I_{E_1} \otimes U^{(2)}_{TE}\ket{\psi_2}_{A T E B \Theta},
\label{eq:backward_action}
\end{eqnarray}
in which
\begin{equation}
\label{eq:eve_bakward_unitary}
    U^{(2)}_{TE}(\ket{b}_{T}\ket {E_{ab}}_{E}) = \sum_{b'=0}^{2^n-1} \sqrt{p'(b' | ab)} \ket{b'}_T \ket {E_{abb'}}_{E}.
\end{equation}
Hence, the overall state upon transferring qubits are back in Alice's lab is
\begin{equation}
\label{eq:alice_measurement_state}
    \ket{\psi_3}_{A T E B \Theta} =  \frac{1}{\sqrt 2} \sum_{\Theta = 0}^1 \ket{\Theta\vec{\Theta}}_\Theta\frac{1}{\sqrt 2} \sum_{a = 0}^1 \ket{a}_A \sum_{b,b'=0}^{2^n -1} \sqrt{p(b | a) p'(b'|ab)} \ket{b'}_T \ket {b\e \vec{\Theta}}_{B} \ket{E_{abb'}}_{E}.
\end{equation}
We note that according to our model, given by Equations~\eqref{eq:forward_action},~\eqref{eq:bob_action} and~\eqref{eq:backward_action}, Alice's qubit $A$ is isolated in her lab. Of course, one could straightforwardly redo our calculation to incorporate the noise acting on Alice's qubit $A$, but it is to be expected that in practical realizations, the combined action of Eve and the environment on the transferring qubits would, in general, be stronger than that of the environment only on Alice's qubit $A$. Therefore, for simplicity, we neglect the effects of noise on Alice's qubit $A$.

Finally, Alice performs her measurement on state~\eqref{eq:alice_measurement_state}, either in the $Z$ basis, or projecting on the $n+1$-qubit GHZ state. Adding Alice's memory, initially in the state $\ket{00\vec 0}_{M_A}$ (note that register $M_A$ consists of $(n+2)$ qbits), we have
\begin{equation}
\label{eq:state3}
   \ket{\psi_3}_{A T E B \Theta M_A} = \frac{1}{2} \sum_{\Theta = 0}^1 \sum_{a = 0}^1 \sum_{b,b'=0}^{2^n -1} \sqrt{p(b | a) p'(b'|ab)}\ket{\Theta\vec{\Theta}}_\Theta \ket{00\vec 0}_{M_A}  \ket{a}_A \ket{b'}_T \ket {b\e \vec{\Theta}}_{B} \ket{E_{abb'}}_{E}.
\end{equation}
Alice's final measurement is given by 
\begin{eqnarray}
     U_{\Theta_A (A T)_{M_A} } =  \ket{0}_{\Theta_A} \bra{0}\otimes V^{(GHZ)}_{(A T)_{M_A}} + \ket{1}_{\Theta_A} \bra{1}\otimes V^{(Z)}_{(A T)_{M_A}},
\end{eqnarray}
where
\begin{eqnarray}
    V^{(GHZ)}_{(A T)_{M_A}} &=& \left[\ket{GHZ^{n+1}}_{AT}\bra{GHZ^{n+1}} \otimes I_{M^{(g)}_A} + \right.\nonumber\\
    && \left.\quad\quad (I_{AT} - \ket{GHZ^{n+1}}_{AT}\bra{GHZ^{n+1}} )\otimes (\ket 1_{M^{(g)}_A}\bra 0 + \ket 0_{M^{(g)}_A}\bra 1) \right]\otimes I_{M_A^{(01\cdots n)}}
\end{eqnarray}
and
\begin{equation}
    V^{(Z)}_{(A T)_{M_A}} = I_{M^{(g)}_A} \otimes  \left(\sum_{a = 0}^1 \sum_{b = 0}^{2^n -1} \ket{ab}_{AT}\bra{ab}\otimes (\ket{ab}_{M_A^{(01 \cdots n)}}\bra{0\vec 0} + h.c.)\right),
    \label{eq:alice_z_measurement}
\end{equation}
where $h.c.$ stands for the hermitian conjugate. 

\subsection{Parameter estimation}
\label{sec:parameter_estimation}

Note that up to now, all kets were normalized to $1$ and orthogonal to each other, with the exception of Eve's kets $\ket{E_{ab}}_{E_1}$ and $\ket{E_{b'}}_{E_2}$ which, in general, may not be orthogonal to each other. Let us define
\begin{eqnarray}
    \ket{\widetilde E_{abb'}}_{E} \equiv \sqrt{p'(b' | ab) }\ket{E_{abb'}}_{E}.
\end{eqnarray}

Thus, the state in Equation~\eqref{eq:state3} can be rewritten as
\begin{eqnarray}\label{eq:state-before-measure}
    \ket{\psi_3}_{A T E B \Theta} &=&  
    \frac{1}{2} \sum_{\Theta = 0}^1 \sum_{a = 0}^1 \sum_{b,c = 0}^{2^n -1} \sqrt{p(b | a)} \ket{\Theta}_{\Theta} \ket{a}_A \ket{c}_T \ket {b\e \Theta}_{B} \ket{\widetilde E_{abc}}_{E}\otimes \ket{0}_{M_A}.
\end{eqnarray}

{\em First case:} $\Theta = 0$, which means that all Bobs are reflecting and Alice is measuring $\ket{GHZ^{n+1}}_{AT}$. Upon Bobs' reflection and the arrival of the transferring qubits to Alice's lab, but before her final projection measurement, the state is (note that for simplicity we omitted $\ket{0}_{\Theta}$ to which the $\Theta$ register is projected, as well as $\ket {b\e 0}_{B} = \ket {0}_{B}$ and $\ket{0}_{M_A}$, which factor out)
\begin{eqnarray}
    \ket{\psi_4}_{A T E} &=&  
    \frac{1}{\sqrt 2} \left[ \ket{0}_A \sum_{b,c = 0}^{2^n -1}  \sqrt{p(b | 0)} \ket{c}_{T} \ket{\widetilde E_{0  bc}}_{E} + 
  \ket{1}_A \sum_{b,c = 0}^{2^n -1}  \sqrt{p(b | 1)} \ket{c}_{T} \ket{\widetilde E_{1 bc}}_{E}\right].
\end{eqnarray}
Hence, by denoting 
\begin{eqnarray}
\label{eq:defofE0c}
    \sum_{b = 0}^{2^n -1}  \sqrt{p(b | 0)} \ket{\widetilde E_{0  bc}}_{E}\equiv \ket{\widetilde{E}_{0c}}_E
\end{eqnarray}
and
\begin{eqnarray}
\label{eq:defofE1c}
    \sum_{b = 0}^{2^n -1} \sqrt{p(b | 1)} \ket{\widetilde E_{1  bc}}_{E}\equiv \ket{\widetilde{E}_{1c}}_E,
\end{eqnarray}
we can rewrite the last state as:
\begin{eqnarray}
    \ket{\psi_4}_{A T E} &=&  
    \frac{1}{\sqrt 2} \left[ \ket{0}_A \sum_{c = 0}^{2^n-1}\ket{c}_{T} \ket{\widetilde E_{0c}}_{E} + \ket{1}_A \sum_{c = 0}^{2^n-1}\ket{c}_{T} \ket{\widetilde E_{1c}}_{E}\right]. 
\end{eqnarray}

\noindent Now, 
\begin{eqnarray}
\label{eq:p_GHZ_npas}
    p_{GHZ} &=& ||_{AT}\innerprod{GHZ^{n+1}}{\psi_4}_{ATE}||^2 \\
            &=&  {}_{AT}\innerprod{GHZ^{n+1}}{\psi_4}_{ATE}\innerprod{\psi_4}{GHZ^{n+1}}_{AT}
\end{eqnarray}
and noticing that  $\ket{GHZ^{n+1}}_{AT} = \frac{1}{\sqrt 2} (\ket 0_A \ket{\vec 0}_T + \ket 1_A \ket{\vec 1}_T)$ we have
\begin{eqnarray}
{}_{AT}\innerprod{GHZ^{n+1}}{\psi_4}_{ATE} = \frac{1}{2} \left(\ket{\widetilde E_{0  0}}_{E} + \ket{\widetilde E_{1  1}}_{E}\right)
\end{eqnarray}
and hence
\begin{eqnarray}
\label{eq:p_GHZ}
    p_{GHZ} &=& \frac{1}{4}\left[ \innerprod{\widetilde E_{0 0}}{\widetilde E_{0 0}}+ \innerprod{ \widetilde E_{1 1}}{\widetilde E_{1 1}} +  2\Real(\innerprod{\widetilde E_{0 0}}{ \widetilde E_{1 1}})\right].
\end{eqnarray}

Note that $p_{GHZ}$ is an observable quantity, obtained by Alice's projection onto the $\ket{GHZ^{n+1}}_{AT}$ state. Moreover, by performing cut-and-choose for the case of $\Theta = 0$ and performing the measurement in the $Z$ basis on the transferring qubits $T$, Alice also obtains $\innerprod{\widetilde E_{0 c}}{\widetilde E_{0 c}}$ and $\innerprod{\widetilde E_{1 c}}{\widetilde E_{1 c}}$, in particular $\innerprod{\widetilde E_{0 0}}{\widetilde E_{0 0}}$ and $\innerprod{\widetilde E_{1 1}}{\widetilde E_{1 1}}$. Thus, from~\eqref{eq:p_GHZ} one estimates $\Real (\innerprod{\widetilde E_{0 0}}{ \widetilde E_{1 1}})$, which should feature in the expression for the bounds for von Neumann conditional entropy.

Therefore, from~\eqref{eq:p_GHZ} we have
\begin{equation}
\label{eq:real_part_1}
   \Real (\innerprod{\widetilde E_{0 0}}{ \widetilde E_{1 1}}) = \frac{4 p_{GHZ} - \innerprod{\widetilde E_{0 0}}{\widetilde E_{0 0}}+ \innerprod{ \widetilde E_{1 1}}{\widetilde E_{1 1}}}{2}.
\end{equation}

Further, let $|\alpha_{0 0}|^2 = \innerprod{\widetilde E_{0 0}}{\widetilde E_{0 0}}$ and $|\alpha_{1 1}|^2 = \innerprod{\widetilde E_{1 1}}{\widetilde E_{1 1}}$. Since by definition $||\ket{E_{00}}|| = ||\ket{E_{11}}|| = 1$, from Equations~\eqref{eq:defofE0c} and~\eqref{eq:defofE1c}, we have:
\begin{eqnarray}
    |\alpha_{a a}|^2 & = & \innerprod{\widetilde E_{a a}}{\widetilde E_{a a}} \nonumber \\
    &=& \sum_{b,b' = 0}^{2^n -1} \sqrt{p(b | a) p(b' | a)} \innerprod{ \widetilde E_{a b a}}{\widetilde E_{a b' a}} \\
    &=& \sum_{b,b' = 0}^{2^n -1} \sqrt{p(b | a) p(b' | a) p'(a | a b) p'(a| a b')} \innerprod{ E_{a b a}}{E_{a b' a}}.\nonumber 
\end{eqnarray}

\noindent In general,
\begin{eqnarray}
\label{eq:alphas}
    |\alpha_{a c}|^2 & = & \innerprod{\widetilde{E}_{a c}}{\widetilde{E}_{a c}} \nonumber \\
    &=& \sum_{b,b' = 0}^{2^n -1} \sqrt{p(b | a) p(b' | a)} \innerprod{ \widetilde E_{a b c}}{\widetilde E_{a b' c}} \\
    &=& \sum_{b,b' = 0}^{2^n -1} \sqrt{p(b | a) p(b' | a) p'(c | a b) p'(c| a b')} \innerprod{ E_{a b c}}{E_{a b' c}}.\nonumber
\end{eqnarray}
Since from the additional cut-and-choose step for the $\Theta = 1$ case (see below) we have estimates for $p(b|a)$, and $|\alpha_{a c}|^2 = \innerprod{\widetilde{E}_{a c}}{\widetilde{E}_{a c}}$ are observable quantities. By solving~\eqref{eq:alphas} one obtains $\innerprod{ \widetilde E_{a b c}}{\widetilde E_{a b' c}}$ that feature in the expressions for the bounds on conditional entropy $S(A|E)$, see~\eqref{eq:new-entropy-bound} and~\eqref{eq:new-lambda}. These equations are derived by adapting Theorem~\ref{thm:cq-entropy} to our protocol featuring $n$ Bobs.

Also, we have 
\begin{equation}
\label{eq:real_parts}
    \Real\innerprod{\widetilde E_{0 0}}{\widetilde E_{1 1}} = \sum_{b,b' = 0}^{2^n -1} \sqrt{p(b|0) p(b'|1)} \innerprod{\widetilde E_{0 b 0}}{\widetilde E_{1 b' 1}},
\end{equation}
which gives an alternative estimate of $\Real\innerprod{\widetilde E_{0 0}}{\widetilde E_{1 1}}$, in addition to the one obtained by~\eqref{eq:real_part_1}.

\bigskip

{\em Second case:} $\Theta = 1$, which means that Bobs are measuring their memory qubits $B$ in the $Z$ basis and resending, while Alice is, upon the transferring qubits $T$ arrive in her lab, measure her qubit $A$ in the $Z$ basis. She discards the $T$ qubits.  Note that a more optimal solution would be to measure the $A$ qubit and all $T$ qubits, using the ``lowest noise'' measurement for her raw key (thus minimizing the uncertainty between Alice and the Bobs).  However, here we take the simpler approach and just use the $A$ register - our analysis below can be easily adapted to using an alternative $T$ qubit.

Upon Bobs receive the transferring qubits, and correlating them with their memory $B$, the state is (see Equation~\eqref{eq:bob_action})
\begin{eqnarray}
    \ket{\psi_4}_{A T B E} & = & \frac{1}{\sqrt 2} \sum_{a = 0}^1 \ket{a}_A \sum_{b=0}^{2^n -1} \sqrt{p(b | a)} \ket{ b}_T \ket {b}_{B} \ket{E_{ab}}_{E} \nonumber \\
    & = & \frac{1}{\sqrt 2} \sum_{b=0}^{2^n -1} \left( \sqrt{p(b | 0)} \ket{0}_A \ket{E_{0b}}_{E}
 + \sqrt{p(b | 1)} \ket{1}_A \ket{E_{1b}}_{E} \right)\ket{bb}_{TB}.
\end{eqnarray}
Thus, the probability of obtaining bit-string $b$ when measuring the memory $B$ (equivalently, the transferring qubits $T$) is given by
\begin{eqnarray}
    p_B(b) = \left\| \frac{1}{\sqrt 2} \left( \sqrt{p(b | 0)} \ket{0}_A \ket{E_{0b}}_{E}
 + \sqrt{p(b | 1)} \ket{1}_A \ket{E_{1b}}_{E} \right)\right\|^2 . 
\end{eqnarray}
Noting that $\innerprod{a}{a'}_A = \delta_{a,a'}$, we have
\begin{eqnarray}
    p_B(b) = \frac{1}{2} \left( p(b | 0)
 + p(b | 1) \right). 
\end{eqnarray}

Upon the arrival of the transferring qubits, sent by Bobs, to Alice's lab, but before her final projection measurement, the state is\begin{eqnarray}
    \ket{\psi_4}_{A T B E} &=& 
    \frac{1}{\sqrt 2} \sum_{a = 0}^1 \sum_{b,c = 0}^{2^n -1} \sqrt{p(b | a)p'(c|ab)}  \ket{a}_A \ket{c}_T \ket {b}_{B} \ket{E_{abc}}_{E}.
\end{eqnarray}
Noting that $\innerprod{a}{a'}_A = \delta_{a,a'}$, $\innerprod{c}{c'}_T = \delta_{c,c'}$ and $\innerprod{E_{abc}}{E_{abc}}_E = 1$, one obtains Alice's observable probabilities of the measurements in the computational basis, $p_{A}(a)$, for the case of measuring only her qubit $A$, and $p_{A}(ac)$, for the case of measuring both her qubit $A$ and the transferring qubits $T$: 
\begin{eqnarray}
\label{eq:cond_prob_1}
    p_{A}(a) &=& \frac{1}{ 2} \sum_{b,c = 0}^{2^n -1} p'(c | ab) p(b | a) = \sum_c \ [P'(a) P]_{ca} = \frac{1}{ 2}, \nonumber \\ 
    p_{A}(ac) &=& \frac{1}{ 2} \sum_{b = 0}^{2^n -1} p'(c | ab) p(b | a) = [P'(a) P]_{ca},
\end{eqnarray}
where $[P]_{ba} = p(b|a)/\sqrt{2}$ and $[P'(a)]_{cb} = p'(c|ab)/\sqrt{2}
$.

Note that Alice's qubit was always in her possession, and all the operations performed were on other systems. Thus, Alice's probability of the $Z$ measurement does not depend on the moment when the measurement is performed and is always $p_A(0) = p_A(1) = 1/2$.

Also, by adding additional cut-and-choose step in which Alice communicate her outcome to Bobs for a certain subset of runs (for the case when $\Theta = 1$), Bobs can obtain the conditional probabilities $p(b|0)$ and~$p(b|1)$.

\subsection{Entropy evaluation}
\label{sec:entropy_evaluation}

We now compute $S(A|E)$, the entropy in the state resulting in a key-bit being derived.  From Equation~\eqref{eq:state3}, we derive the final mixed state assuming all parties measure their systems in the $Z$ basis (also, conditioning on $\Theta = 1$, which is required for the parties to have a key-bit on a particular round.  This final mixed state is found to be:
\begin{equation}
  \rho_{ATBE} = \frac{1}{2}\sum_{a=0}^1\sum_{b,b' = 0}^{2^n -1} p(b|a)p'(b'|ab)\kb{a b' b}_{ATB}\otimes\kb{E_{abb'}}_E,
\end{equation}
where, at this point, Alice holds both the $A$ and the $T$ register.  Since Alice's key bit is derived from her first qubit register (the $A$ register), we trace out the $T$ and $B$ registers, leaving us with: \begin{align}
  \rho_{AE} &= \frac{1}{2}\kb{0}_A\otimes\sum_{b,b' = 0}^{2^n -1} p(b|0)p'(b'|0b)\kb{E_{0bb'}}_E + \frac{1}{2}\kb{1}_A\otimes\sum_{b,b' = 0}^{2^n -1} p(b|1)p'(b'|1b)\kb{E_{1bb'}}_E\notag.
  \end{align}

From the above, we may use Theorem \ref{thm:cq-entropy} to compute the conditional von Neumann entropy as follows:

\begin{align}
  S(A|E)_\rho &= \frac{1}{2}\max_{\pi_1,\pi_2}\sum_{b,b' = 0}^{2^n -1}\Bigg[\left(p(b|0)p'(b'|0b) + p(\pi_1(b)|1)p'(\pi_2(b')|1\pi_1(b))\right)\notag\\
  &\times\left(H\left(\frac{p(b|0)p'(b'|0b)}{p(b|0)p'(b'|0b) + p(\pi_1(b)|1)p'(\pi_2(b')|1\pi_1(b))}\right) - H\Big(\lambda\big(b \,b'\,\pi_1(b)\,\pi_2(b')\big)\Big)\right)\Bigg],\label{eq:new-entropy-bound}
\end{align}
where the above maximization is over all permutations $\pi_1,\pi_2:\{0, 1, \cdots, 2^n-1\}\rightarrow\{0,1,\cdots, 2^n-1\}$ and:
\begin{equation}
  \lambda(b\, b'\, c\, c') = \frac{1+\sqrt{(p(b|0)p'(b'|0b) - p(c|1)p'(c'|1c))^2 + 4p(b|0)p'(b'|0b)p(c|1)p'(c'|1c)\Real^2\!\braket{E_{0bb'}|E_{1cc'}}}}{2\left[p(b|0)p'(b'|0b) + p(c|1)p'(c'|1c)\right]}.\label{eq:new-lambda}
\end{equation}

We will now consider a detailed analysis for a relevant noise case, namely the depolarization channel. We note that our above analysis is general and can be applied to any collective attack. However, deriving a suitable bound on the inner-products of the various Eve vectors is, in general, non-trivial.  We were able to derive suitable bounds in the depolarization case and do so below.  Note that depolarizing noise is the most commonly considered noise model in theoretical QKD security proofs; furthermore, the parties can even enforce that the noise is depolarizing by aborting if they detect something different (thus, the protocol remains secure even if the noise is not depolarizing since parties will simply abort in that case).  Directly analyzing alternative noise models remains an interesting future problem.

\section{Noise evaluation -- depolarising channel}
\label{sec:depolarising_channel}

In the following, we evaluate a lower bound for the secret key rate in the asymptotic scenario assuming the noise and eavesdropping effect to be modeled by the depolarising channel acting on an arbitrary state $\rho_T$~as 
\begin{equation}
\label{eq:depolarising}
	\mathcal{E}_{Q}(\rho_T) \equiv \Tr_E\left[U_{TE}^{(Q)} \big(\rho_T \otimes \ket{\Omega}_{E}\bra{\Omega}\big) U_{TE}^{(Q)\dagger}\right] = (1-Q)\rho_T + \frac{Q}{d}I_T. 
\end{equation} 
Note that we are interested in the action of the forward channel $\mathcal{E}_{Q}$ only on the initial GHZ-type state components $\ket{\vec a}$, where $\vec a \equiv (a,a, \dots a)$ and $a \in \{ 0,1 \}$, which give the relevant probabilities from Equation~\eqref{eq:eve_forward_unitary}. Thus,~\eqref{eq:depolarising} reduces to
\begin{equation}
\label{eq:depolarising_F_1}
\mathcal{E}_{Q}(\ket{\vec a}_T\bra{\vec a}) \equiv \Tr_E\left[U_{TE}^{(Q)} \ket{\vec a \, \Omega}_{TE}\bra{\vec a \, \Omega} U_{TE}^{(Q)\dagger}\right] = (1-Q)\ket{\vec a}_{T}\bra{\vec a} + \frac{Q}{d}I_T , 
\end{equation}
with
\begin{equation}
\label{eq:depolarising_p(b|a)}
p_Q(b| \vec a) = \Tr\left[ \ket{b}_{T}\bra{b} \cdot \mathcal{E}_{Q}(\ket{\vec a}_T\bra{\vec a})\right] = {}_{T}\bra{b} \Big[ \mathcal{E}_{Q}(\ket{\vec a}_T\bra{\vec a}) \Big] \ket{b}_T 
,\end{equation}
where $b \equiv (b_1,b_2, \dots b_n)$ is an $n$-bit string, i.e., $b_i \in \{ 0,1 \}$, with $i\in\{1, 2, \dots n \}$. Evaluating~\eqref{eq:depolarising_p(b|a)}, we obtain
\begin{equation}
\label{eq:depolarising_p(b|a)_us}
    p_Q(b| \vec a) = \left\{\begin{array}{ll}
    \ds 1 - Q\frac{d-1}{d} & \text{ if  } b = \vec a\\ \ \\
    \ds \frac{Q}{d} & \text{ otherwise} \, .
    \end{array}\right.
\end{equation}

The action of a depolarising channel is achieved by entangling the system with the proper environment through a joint unitary evolution and then tracing the environment out. In our case, the unitary $U_{TE}^{(Q)}$ entangles the transferring qubits $T$ with the environment $E$, which consists of three parts, $E = E_1E_2E_3 = E_{123}$, such that $\mathcal H_T \cong \mathcal{H}_{E_1} \cong \mathcal{H}_{E_2}$, and $\mathcal{H}_{E_3} \cong \mathbb{C}^2$. The initial environment state is
 \begin{equation}
 \label{eq:omega_1}
     \ket{\Omega}_{E} =\ket{\Omega}_{E_{123}} = \left( \frac{1}{2^{n/2}} \sum_{i=0}^{2^n-1} \ket{i}_{E_1}\ket{i}_{E_2} \right) \otimes \left( \sqrt{1-Q} \ket{0}_{E_3} + \sqrt{Q} \ket{1}_{E_3} \right),  \end{equation}
while the controlled overall unitary is
\begin{equation}
\label{eq:controlled_depolarisation}
    U_{TE}^{(Q)} =U_{TE_{123}}^{(Q)} = \left[ I_{TE_1} \otimes \ket{0}_{E_3}\! \bra{0} + \left( \sum_{i,j=0}^{2^n-1} \ket{j}_T\ket{i}_{E_1}\!\!\bra{i}_T\bra{j}_{E_1} \right) \otimes \ket{1}_{E_3}\!\bra{1} \right] \otimes I_{E_2} .
\end{equation}
In the above controlled operation, the controlled qubit $E_3$ keeps the system $T$'s state intact (as well as that of $E_1$) if its state is $\ket{0}_{E_3}$, and swaps the states of $T$ and $E_1$ if its state is $\ket{1}_{E_3}$. For $E_1$ to be in a maximally mixed state, system $E_2$ is introduced, such that the initial state of the $E_1E_2$ is maximally entangled.

Upon receiving the transferring qubits in state
\begin{equation}\label{eq:receive_state}
\ket{\psi_1}_{ATE} = (I_A \otimes U_{TE}^{(Q)})\ket{g^{(n+1)}(0, \vec 0)}_{AT}\ket{\Omega}_E,	
\end{equation}
Bobs either reflect them, or measure them in the $Z$ basis and send them back. We analyse each case separately. 

\paragraph{Case of reflection:}
On their way back to Alice, the transferring qubits undergo another depolarisation $\mathcal{E}_{\widetilde{Q}}$ characterized by a possibly different coefficient $\widetilde{Q}$. Thus, for the new environment $\widetilde E$ 
 we have $\mathcal{H}_{\widetilde E} \cong \mathcal{H}_E$. Similarly, the joint unitary and the initial states satisfy $U_{TE}^{(\widetilde{Q})} \cong U_{TE}^{(Q)}$ and $\ket{\widetilde\Omega}_{\widetilde E} \cong \ket{\Omega}_E$, respectively. Thus, we have
\begin{equation}
\label{eq:depolarising_R_1}
\mathcal{E}_{\widetilde{Q}}(\ket{b}_T\bra{b}) \equiv \Tr_E\left[U_{TE}^{(\widetilde{Q})} \ket{b \, \widetilde\Omega}_{TE}\bra{b \, \widetilde\Omega} U_{TE}^{(\widetilde{Q})\dagger}\right] = (1-\widetilde{Q})\ket{b}_{T}\bra{b} + \frac{\widetilde{Q}}{d}I_T , 
\end{equation}
with
\begin{equation}
\label{eq:depolarising_p'(b|a)}
p_{\widetilde{Q}}(b'|b) = \Tr\left[ \ket{b'}_{T}\bra{b'} \cdot \mathcal{E}_{Q}(\ket{b}_T\bra{b})\right] = {}_{T}\bra{b'} \Big[ \mathcal{E}_{Q}(\ket{b}_T\bra{b}) \Big] \ket{b'}_T 
.\end{equation}
Evaluating~\eqref{eq:depolarising_p'(b|a)}, we obtain
\begin{equation}
\label{eq:depolarising_p'(b|a)us}
    p_{\widetilde{Q}}(b'|b) = \left\{\begin{array}{ll}
    \ds 1 - \widetilde{Q}\frac{d-1}{d} & \text{ if } b' = b\\ \ \\
    \ds \frac{\widetilde{Q}}{d} & \text{ otherwise} \, .
    \end{array}\right.
\end{equation}

Thus, in case of reflection ($R$), the evolution of the initial GHZ state  $\ket{g^{(n+1)}(0, \vec 0)}_{AT}$ before Alice's measurement is given by

\begin{eqnarray}
\label{eq:depolarising_GHZ}
     \ket{\psi^R} \!\!\! &=&\!\!\!\mathcal{E}_{R}\left(  \ket{g^{(n+1)}(0, \vec 0)}_{AT}\bra{g^{(n+1)}(0, \vec 0)}\right) \nonumber \\
     \!\!\! &=& \!\!\! \Tr_E \biggl\{\! \left[I_A \otimes \left(U_{TE}^{(\widetilde{Q})} U_{TE}^{(Q)} \right) \right] \ket{g^{(n+1)}(0, \vec 0)}_{AT}\ket{\Omega}_E\bra{\Omega}\bra{g^{(n+1)}(0, \vec 0)}_{AT} \left[I_A \otimes \left(U_{TE}^{(\widetilde{Q})}U_{TE}^{(Q)} \right) \right]^\dagger\!\biggl\},
\end{eqnarray}
while the probability of obtaining the $GHZ$ state is
\begin{eqnarray}
\label{eq:depolarising_p_GHZ}
    p_{GHZ} &=& {}_{AT}\bra{g^{(n+1)}(0, \vec 0)} \Big[ \mathcal{E}_{R}\left(\ket{g^{(n+1)}(0, \vec 0)}_{AT}\bra{g^{(n+1)}(0, \vec 0)}\right) \Big] \ket{g^{(n+1)}(0, \vec 0)}_{AT} \\
    &=&  1 - Q_{GHZ} \left(1 - \frac{1}{2^{n+1}}\right),
\end{eqnarray}
where $Q_{GHZ} \equiv Q + \widetilde Q - Q\widetilde Q$. For noiseless channels ($Q=\widetilde Q = 0$) we have $p_{GHZ}=1$. The last equality is straightforward to compute. 

\paragraph{Case of measurment:}
In case Bobs are measuring ($M$) in the $Z$ basis ($\Theta = 1$), their action is given by
\begin{equation}
    \label{eq:depolarising_Bob_Z-measurement_us}
    \left[ \sum_{\vec b = 0}^{2^n - 1} \ket{b}_T\bra{b} \otimes (\ket{b}_B\bra{\vec 0} + h.c. ) \right] \otimes I_{AE} \ket{\psi_1}_{ATE}\ket{\vec 0}_B \longmapsto \ket{{\bar\psi}^{M}}_{ATEB} ,
\end{equation}
where $\ket{\psi_1}_{ATE}$ is the state of Alice's, transferring and Eve/environment registers upon the transferring qubits arrived to Bobs, given by Equation~\eqref{eq:receive_state}. Upon their return, the transferring qubits are decohered by $\mathcal{E}_{\widetilde{Q}}$, i.e., $U_{TE}^{(\widetilde{Q})}$ entangles them with $\widetilde E$, 
\begin{equation}
    \label{eq:depolarising_Bob_Z-measurement_final}
     \left(I_A\otimes I_E\otimes I_B\otimes U_{T\widetilde E}^{(\widetilde{Q})}\right)\ket{\bar{\psi}^M}_{ATEB}\ket{\vec 0}_B\ket{\widetilde\Omega}_{\widetilde E} \longmapsto \ket{\psi^{M}}_{ATE\widetilde E B}.
\end{equation}

Finally, Alice measures her and the transferring qubits in the $Z$ basis -- she projects $\ket{\psi^{M}}_{ATE\widetilde E B}$ onto $\ket{ab}_{AT}\bra{ab}\otimes I_{EB}$. We thus get the following probabilities
\begin{equation}
    \label{eq:depolarising_Alice-Bob_Z-measurement}
    p(a, b) = \frac{\widetilde{Q}}{2^{n+1}} + \delta_{\vec a, b}\frac{1 - \widetilde{Q}}{2}.
\end{equation}
We see that 
\begin{equation}
\sum_{a = 0}^{1}\sum_{b = 0}^{2^n - 1} p(a,b) = 2^{n+1}\frac{\widetilde{Q}}{2^{n+1}} + 2\frac{1 - \widetilde{Q}}{2} = 1, 
\end{equation}
and that in the noiseless case ($Q = \widetilde Q = 0 = Q_{GHZ}$) we have $p(a, b) = \delta_{\vec a, b}/2$, i.e., $p(a, b) = 1/2$ if $b \in \{ \vec 0, \vec 1 \}$, otherwise $p(a,b) = 0$.

To obtain the secure key rate, in addition to the above probabilities~\eqref{eq:depolarising_p_GHZ} and~\eqref{eq:depolarising_Alice-Bob_Z-measurement}, we need the overlaps between the {\em appropriate} Eve/environment states. In the case of depolarising channels, those states can be straightforwardly obtained from the ``original'' Eve's {\em orthonormal} states $\{\ket{i}_{E_1}\}$, $\{\ket{j}_{E_2}\}$ and $\{\ket{0}_{E_3}, \ket{1}_{E_3}\}$ (and analogously for $\mathcal{H}_{\widetilde E}$ environment states), which significantly simplifies the calculation.

To do so, let us re-write the final state $\ket{\psi^M}_{A B T E \tilde{E}}$ from Equation~\eqref{eq:depolarising_Bob_Z-measurement_final} using unnormalized (and possibly non-orthogonal) states $\ket{{E}_{abc}}_{E\tilde{E}}$,
\begin{equation}
    \ket{\psi^M}_{A B T E \tilde{E}} = 
    \sum_{a = 0}^1 \sum_{b,c = 0}^{2^n -1} \ket{a}_A \ket {b}_{B} \ket{c}_T \ket{{E}_{abc}}_{E\tilde{E}}.
\end{equation}
There are four types of states: 
\begin{itemize}
    \item[$\bullet$] $\vec a=b=c$ (there are 2 such states):
    \begin{eqnarray*}
        2^{\frac{2n+1}{2}} \ket{{E}_{aaa}}_{E\tilde{E}} &= &\ds \sqrt{(1-Q)(1-\widetilde Q)} \left(\sum_{i,k = 0}^{2^n - 1} \ket{ii}_{E_{12}}\ket{kk}_{\widetilde{E}_{12}}\right) \ket{00}_{E_3\widetilde{E}_3} \\
        &+&\ds \sqrt{Q(1-\widetilde Q)} \ket{\vec a \vec a \, }_{E_{12}} \left(\sum_{k = 0}^{2^n - 1} \ket{kk}_{\widetilde{E}_{12}}\right) \ket{10}_{E_3\widetilde{E}_3} \\
        &+& \sqrt{(1-Q)\widetilde Q} \left(\sum_{i = 0}^{2^n - 1} \ket{ii}_{E_{12}}\right) \ket{\vec a \vec a \, }_{\widetilde{E}_{12}} \ket{01}_{E_3\widetilde{E}_3} + \sqrt{Q\widetilde Q} \ket{\vec a \vec a \, }_{E_{12}} \ket{\vec a \vec a \, }_{\widetilde{E}_{12}} \ket{11}_{E_3\widetilde{E}_3}.
    \end{eqnarray*}
 \item[$\bullet$] $\vec a=b\neq c$ (there are $2(2^n - 1)$ such states):
    \begin{eqnarray*}
        2^{\frac{2n+1}{2}} \ket{{E}_{aac}}_{E\tilde{E}} = \sqrt{(1-Q)\widetilde Q} \left(\sum_{i = 0}^{2^n - 1} \ket{ii}_{E_{12}}\right) \ket{\vec a c}_{\widetilde{E}_{12}} \ket{01}_{E_3\widetilde{E}_3} + \sqrt{Q\widetilde Q} \ket{\vec a \vec a \, }_{E_{12}} \ket{\vec a c}_{\widetilde{E}_{12}} \ket{11}_{E_3\widetilde{E}_3}.
    \end{eqnarray*}
\item[$\bullet$] $\vec a \neq b = c$ (there are $2(2^n - 1)$ such states):
    \begin{eqnarray*}
        2^{\frac{2n+1}{2}} \ket{{E}_{abb}}_{E\tilde{E}} = \sqrt{Q(1-\widetilde Q)} \ket{\vec a b}_{\widetilde{E}_{12}}\left(\sum_{k = 0}^{2^n - 1} \ket{kk}_{E_{12}}\right)  \ket{01}_{E_3\widetilde{E}_3} + \sqrt{Q\widetilde Q} \ket{\vec a b}_{E_{12}} \ket{bb}_{\widetilde{E}_{12}} \ket{11}_{E_3\widetilde{E}_3}.
    \end{eqnarray*}
\item[$\bullet$] $\vec a \neq b \neq c$ (there are $2([2^n - 1)2^n - (2^n - 1)] = 2[2^{2n} - (2^n - 1)]$ such states; note that  it is possible~$a = c$):
    \begin{eqnarray*}
        2^{\frac{2n+1}{2}} \ket{{E}_{abc}}_{E\tilde{E}} = \sqrt{Q\widetilde Q} \ket{\vec ab}_{E_{12}} \ket{bc}_{\widetilde{E}_{12}} \ket{11}_{E_3\widetilde{E}_3}.
    \end{eqnarray*}    
\end{itemize}

\noindent The (squared) norms of the above states are:
\begin{itemize}
    \item[$\bullet$] $\displaystyle _{E\tilde{E}}\braket{{E}_{aaa}|{E}_{aaa}}_{E\tilde{E}} = \frac{(1-Q)(1-\widetilde{Q})}{2} + \frac{Q(1-\widetilde{Q}) + (1-Q)\widetilde{Q}}{2^{n+1}} + \frac{Q\widetilde{Q}}{2^{2n+1}}.$ 
    \item[$\bullet$] $\displaystyle _{E\tilde{E}}\braket{{E}_{aac}|{E}_{aac}}_{E\tilde{E}} = \frac{(1-Q)\widetilde{Q}}{2^{n+1}} + \frac{Q\widetilde{Q}}{2^{2n+1}}.$ 
    \item[$\bullet$] $\displaystyle _{E\tilde{E}}\braket{{E}_{abb}|{E}_{abb}}_{E\tilde{E}} = \frac{Q(1-\widetilde{Q})}{2^{n+1}} + \frac{Q\widetilde{Q}}{2^{2n+1}}.$ 
    \item[$\bullet$] $\displaystyle _{E\tilde{E}}\braket{{E}_{abc}|{E}_{abc}}_{E\tilde{E}} = \frac{Q\widetilde{Q}}{2^{2n+1}}.$
\end{itemize}

Finally, all states are orthogonal between each other, apart from $\ket{{E}_{000}}_{E\tilde{E}}$ and $\ket{{E}_{111}}_{E\tilde{E}}$, whose overlap is:
\begin{equation*}
    _{E\tilde{E}}\braket{{E}_{000}|{E}_{111}}_{E\tilde{E}} = \frac{(1-Q)(1-\widetilde{Q})}{2}.
\end{equation*}

Considering the fact that all vectors, save $_{E\tilde{E}}\braket{E_{000}|E_{111}}_{E\tilde{E}}$ are orthogonal, it is easy to simplify Equations~\eqref{eq:new-entropy-bound} and~\eqref{eq:new-lambda} to the following:
\begin{equation}
    S(A|E) \ge \left(\frac{(1-Q)(1-\widetilde{Q})}{2} + \frac{Q(1-\widetilde{Q}) + (1-Q)\widetilde{Q}}{2^{n+1}} + \frac{Q\widetilde{Q}}{2^{2n+1}}\right)\big(1 - H(\lambda(0\,0\, 1\,1)\big), \texttt{ and}
\end{equation}
\begin{equation}
    \lambda(0\,0\,1\,1) = \frac{1}{2}\left(1 + \frac{(1-Q)(1-\widetilde{Q})}{2\ds \left(\frac{(1-Q)(1-\widetilde{Q})}{2} + \frac{Q(1-\widetilde{Q}) + (1-Q)\widetilde{Q}}{2^{n+1}} + \frac{Q\widetilde{Q}}{2^{2n+1}}\right)}\right).
\end{equation}

The {\em error correction leakage } will be $\max_jH(A|B_j)$ (see Equation \eqref{eq:key-rate}) which can be readily seen to be equal to $h(Q_{\texttt{Bob}})$, where $h(x)$ is the binary Shannon entropy function and where $Q_{\texttt{Bob}}$ is the noise observed by one of the Bobs (since the noise in all channels is assumed to be the same, it does not matter, in this case, which Bob we consider).  This value can be seen to be:
\begin{equation}
    Q_{\texttt{Bob}} = \frac{1}{2}\sum_{b = 0}^{2^n -1}(p(0b|1) + p(1b|0)) = 2^{n-1}Q/d = Q/2.
\end{equation}
which is enough to compute the bound of the key rate expressed in Equation~\eqref{eq:key-rate}.
Thus the final key-rate equation is:
\begin{equation}
r \geq S(A|E) - H\left(Q_{\texttt{Bob}}\right) \equiv r_{\texttt{min}}.
\end{equation}

\begin{figure}[t]
    \centering
    \includegraphics[width=.9\linewidth]{./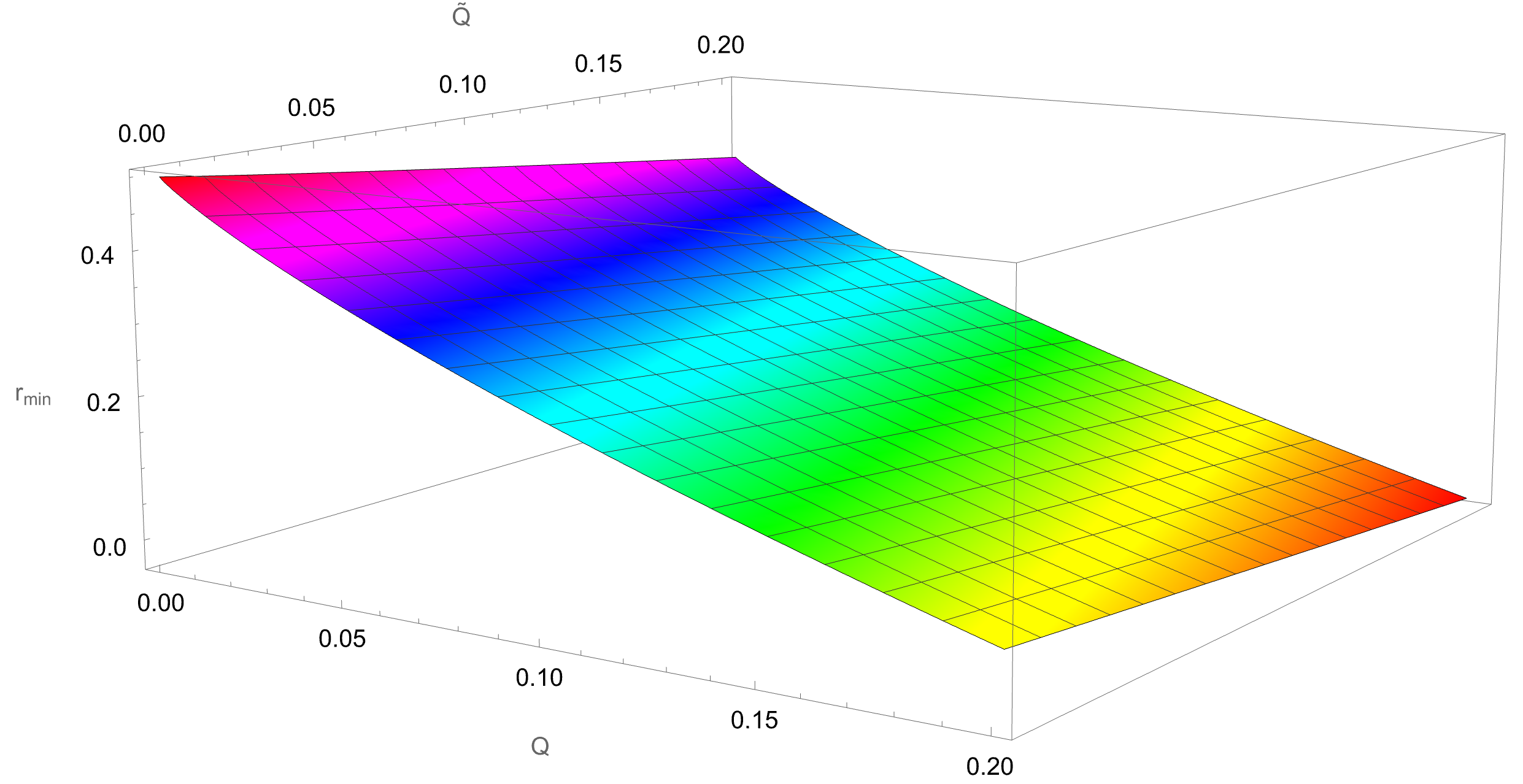}
    \caption{Key rate lower bounds for depolarization channel with $n=10$ Bobs.
    }
    \label{fig:keyrate_3D}
\end{figure}
In Figures~\ref{fig:keyrate_3D},~\ref{fig:keyrate_Q=tildeQ},~\ref{fig:keyrate_Q=0} and~\ref{fig:keyrate_tildeQ=0}, we present the plots for the key rate for the depolarizing channel. In Figure~\ref{fig:keyrate_3D} we present $r_{\texttt{min}}$ as a function of $Q$ and $\widetilde{Q}$ for the case of $n=10$ Bobs. In Figure~\ref{fig:keyrate_Q=tildeQ} we present $r_{\texttt{min}}$ in the special case of $Q = \widetilde{Q}$, for $n = 3,5,7$ Bobs. We see that the range of values of $Q$ grows, albeit slowly, with the increase of the number of Bobs, approaching the value of $Q = \widetilde{Q} = 0.2$.
This increase might be because a higher number of Bobs leads to more interactions between the adversary and the agents, making it easier to detect eavesdropping. Furthermore, since these values represent only the lower bounds of the key rate, it is possible that these bounds become tighter as the number of Bobs increases. Nevertheless, the provided bounds are sufficient as a proof-of-concept and can also be used in an experimental setup.

 Finally, in order to further analyse the asymmetry of Figure~\ref{fig:keyrate_3D}, on Figures~\ref{fig:keyrate_Q=0} and~\ref{fig:keyrate_tildeQ=0} we present the values of $r_{\texttt{min}}$ when $Q$ and $\widetilde{Q}$ are set to zero, respectively, for $n = 3,5,7$ Bobs. We see that when the forward channel is turned off ($Q=0$), we have positive key rate for any value of $\widetilde{Q}$. In contrast to this, when the reflecting qubits are intact ($\widetilde{Q}=0$), one obtains positive lower bounds for the key rate only for $Q < 0.25$. This clearly shows that the forward channel has higher impact on the protocol performance, which can be explained by the fact that the key is generated after the transferring qubits sent by Alice through the forward channel reach Bobs. Thus, only the forward channel affects the key generation, while the noise exhibited by the qubits on their way back to Alice only affects the ``visibility'' of the final key. But as long as the forward channel is noiseless and the Bobs share the perfect key among themselves, possible noise on the qubits' way back to Alice can only diminish the overall key rate (unlike Bobs, Alice might not share the same raw key), but the agents can always extract a non-trivial secure key upon error correction and privacy amplification.

\begin{figure}[t]
    \centering
    \includegraphics[width=.5\linewidth]{./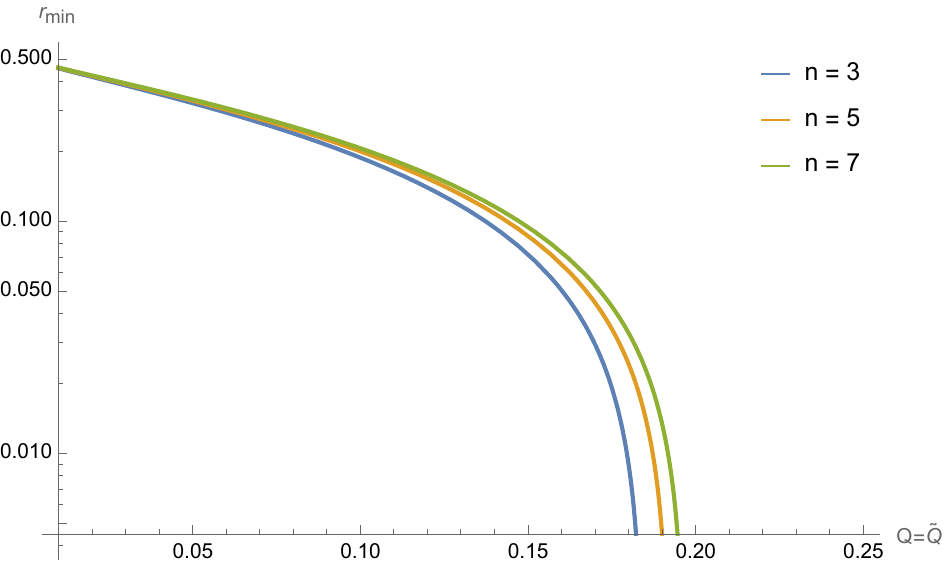}
    \caption{Key rate lower bounds for depolarization channel for several number $n$ of Bobs with $Q = \widetilde Q$.}
    \label{fig:keyrate_Q=tildeQ}
\end{figure}

\begin{figure}[t]
\begin{subfigure}{.5\textwidth}
    \centering
    \includegraphics[width=1\linewidth]{./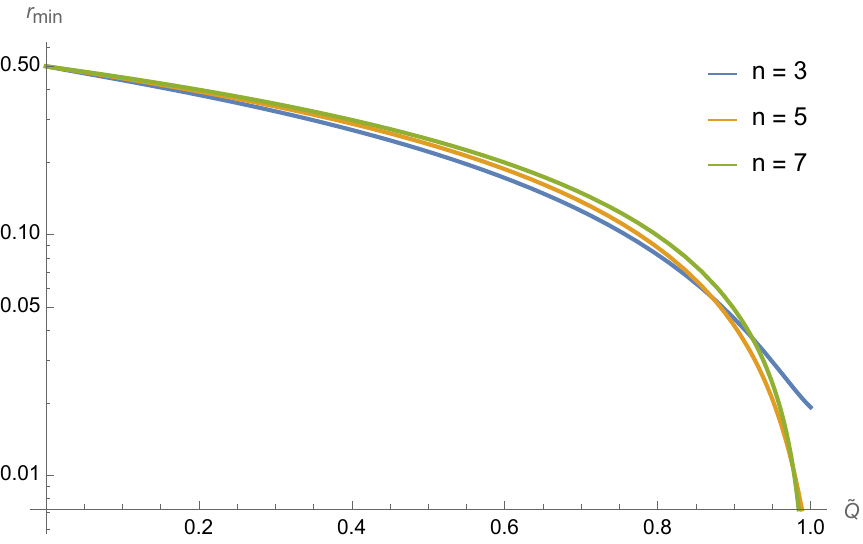}
    \caption{}
    \label{fig:keyrate_Q=0}
\end{subfigure}
\begin{subfigure}{.5\textwidth}
    \centering
    \includegraphics[width=1\linewidth]{./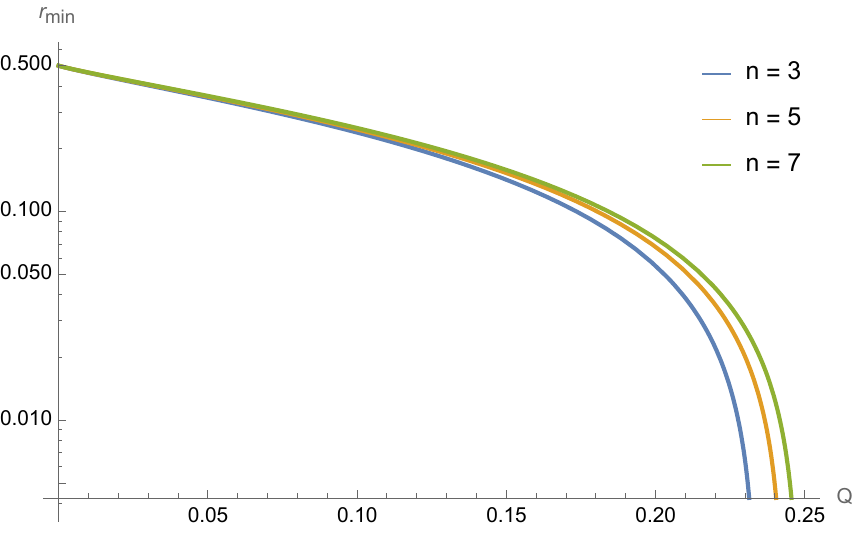}
    \caption{}
    \label{fig:keyrate_tildeQ=0}
\end{subfigure}
\caption{Key rate lower bounds for depolarization channel for several number $n$ of Bobs  with (a) $Q=0$, and (b) $\widetilde Q = 0$.}
\end{figure}

\section{Conclusions}
\label{sec:conclusions}

In this work, we have proposed a semi-quantum conference key agreement (SQCKA) protocol utilizing GHZ states. 
Unlike previous ``fair key agreement'' protocols, where the goal is to ensure all parties contribute equally towards the randomness of the final raw key, our protocol derives randomness directly from measuring a quantum state prepared by one party (or an adversary) as is typically done in most standard (S)QKD protocols.  This leads to a simpler protocol for parties to execute.  Beyond fair key protocols, our work differs from other semi-quantum group key protocols in that we do not require a cycle topology as in~\cite{xian2009quantum,ye2023circular} (which creates additional opportunities for an attacker); and for those protocols not reliant on a cycle topology~\cite{pan2024multi,zhou2019multi}, we provide the first information theoretic analysis of security, computing the key rate in terms of noise in the channel, for any number of parties.  Prior work in non-cycle topologies investigated security in terms of \emph{robustness} (where it is shown that any attack which allows Eve to learn something about the raw key measurements, such an attack can be detected with non-zero probability) or performed an information-theoretic security proof, but only for two or three party variants of the protocol.  Our proof works for any number of parties.

Our contribution introduces a novel SQCKA protocol with an information-theoretic security proof against collective attacks, all while eliminating the need for a trusted mediator. This advancement simplifies the practical implementation of SQCKA protocols, making them more accessible and feasible for real-world applications. The novelty of our work lies in the innovative security analysis and the assurance of robust security guarantees without the need for complex network topologies or third-party trust.

Looking ahead, there are several directions for future research. Developing a general stand-alone security proof that covers a wider range of attack models beyond collective attacks would significantly enhance the robustness of our protocol. Investigating the possibility of a device-independent security proof could ensure security even when the devices used are untrusted or imperfect. Formulating a composable security proof would guarantee that the security of our protocol holds when integrated with other cryptographic protocols. Finally, exploring practical implementation strategies and conducting experimental validations would be essential to assess the real-world performance and robustness of the proposed SQCKA protocol. Addressing these future research directions will strengthen the theoretical foundations of SQCKA protocols and pave the way for their practical deployment in secure communication networks.

\section*{Acknowledgements} 

\noindent RB, PM, and NP's work is funded by FCT/MECI through national funds and, when applicable, co-funded EU funds under UID/50008: Instituto de Telecomunica\c{c}\~{o}es.  NP acknowledges the FCT Est\'imulo ao Emprego Cient\'ifico grant no. CEECIND/04594/2017/CP1393/CT000.   AS acknowledges FCT, LASIGE Research Unit, ref. UID/000408/2025.  RB, PM, NP and AS also acknowledge FCT projects QuantumPrime reference PTDC/EEI-TEL/8017/2020.  WOK would like to acknowledge support from the NSF under grant number 2143644.

\bibliographystyle{unsrt}
\bibliography{bibtex}

\end{document}